\newcommand*{\narrowfont}{\fontfamily{PTSansNarrow-TLF}\selectfont}
\crefname{equation}{Eq.}{equations}
\Crefname{equation}{Equation}{Equations}
\crefname{proposition}{Prop.}{propositions}
\Crefname{proposition}{Proposition}{Propositions}
\crefname{lemma}{Lemma}{lemmata}
\Crefname{lemma}{Lemma}{Lemmata}
\crefname{listing}{Listing}{listings}
\Crefname{listing}{Listing}{Listings}
\crefname{definition}{Def.}{definitions}
\Crefname{definition}{Definition}{Definitions}
\crefname{theorem}{Thm.}{theorems}
\Crefname{theorem}{Theorem}{Theorems}
\crefname{figure}{Fig.}{figures}
\Crefname{figure}{Figure}{Figures}
\crefname{page}{p.}{pages}
\Crefname{page}{Page}{Pages}
\crefname{section}{Sect.}{sections}
\Crefname{section}{Section}{Sections}
\definecolor{fxtarget}{rgb}{0.8000,0.0000,0.0000}
\font\domino=domino
\newcommand{\sdie}{{%
  \raisebox{-0.06 \baselineskip}
    {\resizebox{1.5ex}{!}{\domino3}}}%
  \xspace}
\newcommand{\scoin}{{\kern.5pt\usymH{2784}{1.3ex}\kern.7pt}}
\newcommand{\scoinone}{{\kern.5pt\usymH{2780}{1.3ex}\kern.7pt}}
\newcommand{\Since}[2][0pt]{\hspace{-#1}\quad\text{(\small #2)}}
\renewcommand{\implies}{\rightarrow}
\renewcommand{\phi}{\varphi}
\newcommand{\embed}[1]{\ensuremath{{[\![#1]\!]}}}
\renewcommand{\epsilon}{\varepsilon}
\newcommand{\overbar}[1]{\mkern 1.5mu\overline{\mkern-1.5mu#1\mkern-1.5mu}\mkern 1.5mu}
\newcommand{\dlbox}[1]{[#1]}
\newcommand{\final}[1]{\textrm{final}(#1)}
\newcommand{\MDP}{\mathcal{M}}
\newcommand{\dom}[1]{\ensuremath{\textnormal{dom}\,#1}}
\newcommand{\PDL}{\textnormal{\narrowfont pDL}\xspace}
\newcommand{\pGCL}{\textnormal{\narrowfont pGCL}\xspace}
\newcommand{\LIT}{\ensuremath{\textnormal{\narrowfont ATF}}}
\newcommand{\pchoice}[1]{{\,}_{#1}\!\oplus }
\newcommand{\ndchoice}{\ensuremath{\sqcap}}
\newcommand{\pbox}[2][s]{\ensuremath{[#1]_{#2}}}
\newcommand{\subst}[2]{\ensuremath{[ #1 := #2 ]}}
\NewDocumentCommand{\Expectation}%
  {O{\epsilon} D(){absent} m}%
  {\ifthenelse{ \equal {#2} {} }
    {\ensuremath{\mathbf{E}_{#1}{\left({#3}\right)}}}
    {\ensuremath{\mathbf{E}_{#1}{{#3}}}}}
\NewDocumentCommand{\ExpectedV}%
  {O{\epsilon,\pi} D(){absent} m}%
  {\ifthenelse{ \equal {#2} {} }
    {\ensuremath{\mathbb{E}_{#1}{\left({#3}\right)}}}
    {\ensuremath{\mathbb{E}_{#1}{{#3}}}}}
\newcommand{\paths}[1]{\textrm{paths}(#1)}
\newcommand{\func}[1]{{\boldsymbol{#1}}}
\newcommand{\pval}{\ensuremath{\mathit{p}}}
\newcommand{\pfun}{\ensuremath{\func{p}}}
\newcommand{\pfunepsilon}{\ensuremath{\pfun(\epsilon)}}
\newcommand{\ncondrule}[3]{
  \begin{array}{c}
    \textsc{ ({#1})} \\[1pt]
    #2 \\[1pt]
    \hline\\[-7pt]
    #3
  \end{array} }
\newcommand{\State}{\textit{State}\xspace}
\newcommand{\Act}{\textit{Act}}
\newcommand{\code}[1]{\text{\lstinline[mathescape=true]|#1|}}
\newlist{strtproof}{enumerate}{10}
\setlist[strtproof]{topsep=2ex,itemsep=2ex,wide,labelwidth=!,labelindent=0pt,label*=\arabic*.}
\title{A Specification Logic for Programs in the\\ Probabilistic Guarded Command Language \\ {\large (Extended Version)}}
\author{Ra\'ul Pardo\inst 1 \and Einar Broch Johnsen\inst 2 \and Ina Schaefer\inst 3 \and Andrzej Wąsowski\inst 1}
\institute{IT University of Copenhagen, Copenhagen, Denmark,
\and
University of Oslo, Oslo, Norway
\and
Karlsruhe Institute of Technology, Karlsruhe, Germany}
\begin{document}

\maketitle

\begin{abstract}
  The semantics of probabilistic languages has been extensively studied, but specification languages for their properties have received little attention. This paper introduces the probabilistic dynamic logic \PDL, a specification logic for programs in the probabilistic guarded command language (\pGCL) of McIver and Morgan.  The proposed logic \PDL can express both first-order state properties and probabilistic reachability properties, addressing both the non-deterministic and probabilistic choice operators of \pGCL. In order to precisely explain the meaning of specifications, we formally define the satisfaction relation for \PDL.  Since \PDL embeds \pGCL programs in its box-modality operator, \PDL satisfiability builds on a formal MDP semantics for \pGCL programs.  The satisfaction relation is modeled after PCTL, but extended from propositional to first-order setting of dynamic logic, and also embedding program fragments.  We study basic properties of \PDL, such as weakening and distribution, that can support reasoning systems.  Finally, we demonstrate the use of \PDL to reason about program behavior.
\end{abstract}

\section{Introduction}%
\label{sec:intro}

This paper introduces a specification language for probabilistic
programs.
Probabilistic programming
 techniques and systems are
becoming
increasingly important not only for machine-learning applications but
also for, e.g., random algorithms, symmetry breaking in distributed
algorithms and in the modelling of fault tolerance.
The semantics of probabilistic languages has been extensively studied,
from Kozen's seminal work \cite{kozen79} to recent research
\cite{hark20popl,kaminski19phd,stein21lics,smolka17popl}, but
specification languages for their properties have received little
attention (but see, e.g., \cite{batz22esop}).

The specification language we define in this paper is the
probabilistic dynamic logic \PDL, a specification logic for programs
in the probabilistic guarded command language \pGCL of McIver and
Morgan~\cite{mciver05book}. This programming language combines the
guarded command language of Dijkstra \cite{dijkstra76discipline}, in
which the non-deterministic scheduling of threads is guarded by
Boolean assertions, with state-dependent probabilistic choice.
Whereas guarded commands can be seen as a core language for concurrent
execution, \pGCL can be seen as a core language for probabilistic and
non-deterministic execution.

The proposed logic \PDL can express both first-order state properties
and reachability properties, addressing the non-deterministic as well
as the probabilistic choice operators of \pGCL.
Technically, \PDL is a probabilistic extension of (first-order)
dynamic logic \cite{harel00dynlog}, a modal logic in which programs
can occur within the modalities of logical formulae. The semantics of
dynamic logic is defined as a Kripke-structure over the set of
valuations of program variables.  Dynamic logic allows reachability
properties to be expressed for given (non-probabilistic) programs by
means of modalities. The probabilistic extension \PDL allows
probabilistic reachability properties to be similarly expressed.

In order to precisely explain the meaning of specifications expressed
in \PDL, we formally define the semantics of this logic in terms of a
satisfaction relation for \PDL formulae (a model-theoretic semantics).
The satisfaction relation is modeled after
PCTL~\cite{DBLP:journals/fac/HanssonJ94}, but extended from a
propositional to a first-order setting of dynamic logic, embedding
program fragments in the modalities.
Since \PDL embeds \pGCL programs in its formulae, the formalization of
\PDL satisfiability builds on a formal semantics for \pGCL programs,
which is defined by Markov Decision Processes (MDP)~\cite{puterman}.
The formalization of \PDL satisfiability allows us to study basic
properties of specifications, such as weakening and distribution.
Finally, we demonstrate how \PDL can be used to specify and reason
about program behavior. The main contributions of this paper are:%
\begin{itemize}
\item The specification logic \PDL to syntactically express
  probabilistic properties of stochastic non-deterministic programs
  written in \pGCL;
\item A model-theoretic semantics for \PDL over a simple MDP semantics
  for \pGCL programs; the satisfaction relation is modeled after PCTL,
  but extended from a propositional to a first-order setting of
  dynamic logics with embedded \pGCL programs; and
\item A study of basic properties of \PDL and a demonstration of how \PDL can
  be used to specify and reason about \pGCL programs.

\end{itemize}

\noindent
Our motivation for this work is ultimately to define a proof system
which allows us to mechanically verify high-level properties for
programs written in probabilistic programming languages. Dynamic logic
has proven to be a particularly successful logic for such verification
systems in the case of regular (non-probabilistic) programs; in
particular, KeY~\cite{key}, which is based on forward reasoning over
DL formulae, has been used for breakthrough results such as the
verification of the TimSort algorithm \cite{gouw15cav}.  The
specification language introduced in this paper constitutes a step in
this direction, especially by embedding probabilistic programs into
the modalities of the specification language.  Further, the semantic
properties of \PDL form a semantic basis for proof rules, to be
formalized, proven correct, and implemented in future work.


\section{State of The Art}%
\label{sec:related}

Verification of probabilistic algorithms has been addressed with abstract interpretation~\cite{CousotAbsIntProb}, symbolic execution~\cite{FilieriPV13}, or probabilistic model checking~\cite{KwiatkowskaNP12}. Here, we focus on logical reasoning about probabilistic algorithms using dynamic logic. Existing dynamic logics for probabilistic programs are Kozen's PPDL and PrDL of Feldman and Harel. Kozen introduces probability by drawing variable values from distributions, while propositions are measurable real-valued functions \cite{Kozen85}. The program semantics is purely probabilistic; PPDL does not include demonic choice. Probabilistic Dynamic Logic (PrDL) relies on the same notion of state, but introduces probabilistic transitions using a random choice operator\,\cite{FeldmanH82}. Since neither PPDL nor PrDL include non-determinism, to reason about non-deterministic stochastic programs in a program logic we need a new specification language. We aim to develop a first-order dynamic logic for programs (PPDL was propositional) with demonic and probabilistic choice.
\looseness = -1

The main alternative for logical reasoning about probabilistic programs is the weakest pre-expectation calculus, proposed by McIver and Morgan for the probabilistic guarded command language (\pGCL)~\cite{mciver05book}. The language contains explicit probabilistic and demonic choice. Program states are modeled by classical (non-probabilistic) variable assignments, and probabilities are introduced by an explicit probabilistic choice. Assertions are real-valued functions over program state capturing expectations, where a Boolean embedding is used to derive expectations from logical assertions. Reasoning in \pGCL follows a backwards expectation transformer semantics. McIver and Morgan define an axiomatic semantics given by the weakest pre-expectation calculus over \pGCL programs, but do not introduce an operational semantics for the language. Also they do not provide a specification language for \pGCL assertions, i.e., real-valued functions, beyond the Boolean embedding (cf.~\cite{BatzSpec2021}).  In this work, we want to build on this tradition.  However, we think there is a need for a specification language with classical model-theoretical semantics known from logics---a satisfaction semantics.  Dynamic logics is a good basis for such a development, since it is strictly more expressive than Hoare logic and weakest precondition calculi---both can be embedded in dynamic logic \cite{Haehnle22a}. In contrast to these calculi, dynamic logics are closed under logical operators such as first-order connectives and quantifiers; for example, program equivalence, relative to state formulae $\phi$ and $\psi$, can be expressed by the formula $\phi \Rightarrow \pbox[s_1]{} \psi \iff \phi \Rightarrow \pbox[s_2]{} \psi$.

As mentioned, the original \pGCL lacked operational semantics. Since semantics is needed for a traditional definition of satisfaction in a modal logic, we propose to use the  MDP semantics similar to the one of  Gretz et al.\,\cite{GretzKM14}, where post-expectations are rewards in final states. An alternative could be Kaminski's computation tree semantics\,\cite{kaminski19phd}, but we find it more complex and less standard for our purpose (deviating further from traditions of simpler logics like PCTL).
\looseness = -1

Termination analysis of probabilistic programs~\cite{mciver18popl,hark20popl} considers probabilistic reachability properties.  This and other directions of related work, such as separation logic for probabilistic programs~\cite{BatzKKMN19}, expected run-time analysis for probabilistic programs~\cite{KaminskiKMO16} and relational reasoning over probabilistic programs for sensitivity analysis~\cite{RelationalSensitivity2021}, are orthogonal to the goal of defining a specification language for programs, and thus outside of scope of interest for this particular paper. Generally all these approaches rely on the backwards pre-expectation transformer semantics of McIver and Morgan~\cite{mciver05book}.

\section{Preliminaries}
\label{sec:preliminaries}

We review the basic semantic notions used in the main part of the paper.

\begin{definition}[Markov Decision Process]\label{def:mdp}
  A \emph{Markov Decision Process} (MDP) is a tuple $M \! = \! (\State,\Act, \mathbf{P})$ where
  \looseness = -1
  \begin{enumerate*}[label=(\roman*)]

    \item $\State$ is a countable set of states,

    \item $\Act$ is a countable set of actions,

    \item $\mathbf{P} \! : \State \!\times\! \Act \rightharpoonup \textrm{Dist}(\State)$ is a partial transition probability function.
      \looseness = -1

  \end{enumerate*}
\end{definition}

\noindent
Let \(\sigma\) denote the states and $a$ the actions of an MDP. A state \(\sigma\) is \emph{final} if no further transitions are possible from it, i.e.\ \((\sigma, a) \not \in \textrm{dom} (\mathbf{P})\) for any $a$. A \emph{path}, denoted \(\overline\sigma\), is a sequence of states $\sigma_1,\ldots,\sigma_n$ such that $\sigma_n$ is final and there are actions \(a_1, \ldots, a_{n-1}\) such that \(\mathbf{P}(\sigma_i,a_i)(\sigma_{i+1})\geq 0\) for $1\leq i < n$. Let \(\final{\overbar\sigma}\) denote the final state of a path \(\overbar\sigma\).
\looseness = -1

For a given state, the set of applicable actions of \(\mathbf{P}\) defines the \emph{demonic choices} between successor state distributions.  A \emph{positional policy} \(\pi\) is a function that maps states to actions, so \(\pi: \State \rightarrow \Act\).  We assume \(\pi\) to be consistent with $\mathbf{P}$, so \(\mathbf{P} (\sigma, \pi(\sigma))\) is defined.  Given a policy \(\pi\), we define a transition relation $\xrightarrow{\cdot}_\pi \subseteq \State \times [0,1] \times \State$ on states that resolves all the demonic choices in $\mathbf{P}$ and write:
\begin{equation} \label{eq:transition-relation-iff-partial-transition-function}
  \sigma\xrightarrow{\pval}_\pi \sigma' \quad \text{ iff } \quad
  \mathbf{P}(\sigma,\pi(\sigma)) (\sigma') = \pval.
\end{equation}
\looseness = -1

\noindent
For a given policy $\pi$, we let \(\smash{\xrightarrow{\pval}}^\ast_\pi \subseteq \State \times [0,1] \times \State\) denote the reflexive and transitive closure of the transition relation, and define the probability of a path \(\overline{\sigma}=\sigma_1,\ldots,\sigma_n\) by
\begin{equation}
  p = \Pr (\overline\sigma) = 1 \cdot \pval_1 \cdots \pval_n
  \quad \text{ where }  \sigma_1 \xrightarrow{\pval_1}_\pi \cdots \xrightarrow{\pval_n}_\pi \sigma_n .
\end{equation}
Thus, a path with no transitions consists of a single state \(\sigma\), and \(\Pr(\sigma)=1\).  Let \(\textrm{paths}_\pi(\sigma)\) denote the set of all paths with policy \(\pi\) from $\sigma$ to final states.

In this paper we assume that MDPs (and the programs we derive them from) arrive at final states with probability 1 under all policies.  This means that the logic \PDL that we will be defining and interpreting over these MDPs can only talk about properties of almost surely terminating programs, so in general it cannot be used to reason about termination without adaptation. This is what corresponds to the notion of partial correctness in non-probabilistic proof systems.

An MDP may have an associated \emph{reward function} \(r: \State \to [0,1]\) that assigns a real value \(r(\sigma)\) to any final state \( \sigma \in \State \).  (In this paper we assume that rewards are zero everywhere but in the final states.) We define the \emph{expectation} of the reward starting in a state \(\sigma\) as the greatest lower bound on the expected value of the reward over all policies; so the real valued function defined as \looseness = -1
\begin{equation}
  \Expectation[\sigma]()r = \inf_\pi \ExpectedV[\sigma,\pi]()r = \inf_\pi \sum_{\overline \sigma \in \textrm{paths}_\pi(\sigma)} \!\!\!\!\!\! \Pr (\overline \sigma) \, r ( \final{\overline \sigma} )  \enspace , \label{eq:expectation}
\end{equation}
where \ExpectedV[\sigma, \pi]()r stands for the \emph{expected value} of the random variable induced by the reward function under the given policy, known as the \emph{expected reward}.  Note that the expectation \Expectation[\sigma]()r always exists and it is well defined. First, for a given policy the expected value \ExpectedV[\sigma, \pi]()r is guaranteed to exist, as we only consider terminating executions and our reward functions are bounded, non-negative, and non-zero in final states only. The set of possible positional policies that we are minimizing over might be infinite, but the values we are minimizing over are bounded from below by zero, so the set of expected values has a well defined infimum.  Finally, because the MDPs considered here almost surely arrive at a final state, we do not need to condition the expectations on terminating paths to re-normalize probability distributions, which greatly simplifies the technical machinery.

To avoid confusing expectations and scalar values, we use bold font for expectations in the sequel. For instance, \(\func p\) represents an unknown expectation from the state space into $[0,1]$, and \(\func 0\) represents a constant expectation function, equal to zero everywhere.
\looseness = -1

We use characteristic functions to define rewards for the semantics of pGCL programs, consistently with McIver \& Morgan\,\cite{mciver05book}.  For a formula $\phi$ in some logic with the corresponding satisfaction relation, a characteristic function \embed\phi, also known as a Boolean embedding or an indicator function, assigns \(1\) to states satisfying \(\phi\) and \(0\) otherwise.  In this paper, models will be program states, and also states of an MDP.  In general, characteristic functions can be replaced by arbitrary real-valued functions \cite{kaminski19phd}, but this is not needed to interpret logical specifications, so we leave this to future work.

Finally, given a formula \(\phi\) that can be interpreted over a state space of an MDP, we define the truncation of a reward function \(\pfun\) as the function \((\pfun\!\downarrow\!\phi)(\sigma) = \pfun(\sigma)\cdot\embed{\phi}(\sigma)\). The truncation of \(\pfun\) to \(\phi\) maintains the original value of \(\pfun\) for states satisfying \(\phi\) and gives zero otherwise. Note that \(\pfun\!\downarrow\!\phi\) remains a valid reward function if \(\pfun\) was.

\begin{figure}[t]\centering

  \begin{equation*}
    \begin{array}{lrl}
      v &::= & \textit{true} \mid \textit{false} \mid 0 \mid 1 \mid \ldots \\[.6mm]
      e &::=& v \mid x \mid \textit{op}\ e \mid e\ \textit{op}\ e\\[.6mm]
      op& ::= & +\; \mid \; -\; \mid \; *\; \mid \; /\; \mid \; >\; \mid \; ==\; \mid \; \geq\\[.6mm]
      s &::= & s \ndchoice s \mid s \pchoice{e} s \mid s;s \mid \code{skip} \mid x:= e
        \mid \code{if}\ e\ \{ s \}\ \code{else}\ \{ s \} \mid \code{while}\ e\ \{ s \}
    \end{array}
  \end{equation*}

  \vspace{-2.1mm}

  \caption{The syntax of the probabilistic guarded command language pGCL}
  \label{fig:pgcl}

\end{figure}

\section{pGCL: A Probabilistic Guarded Command Language}
\label{sec:pgcl}

The probabilistic guarded command language \pGCL~\cite{mciver05book}, extends Dijkstra's guarded command language \cite{dijkstra76discipline} with probabilistic choice.  \Cref{fig:pgcl} gives the syntax of \pGCL. We let $x$ range over the set $X$ of program variables, $v$ over primitive values, and $e$ over expressions $\mathit{Exp}$.  Expressions $e$ are constructed over program variables $x$ and primitive values $v$ by means of unary and binary operators $\mathit{op}$ (including logical operators $\neg, \land,\lor$ and arithmetic operators $+, -, *,/$). Expressions are assumed to be well-formed.

Statements $s$ include the non-deterministic (or demonic) choice $s_1 \sqcap s_2$ between the branches $s_1$ and $s_2$. We write $s \pchoice{e} s'$ for the probabilistic choice between the branches $s$ and $s'$; if the expression $e$ evaluates to a value $\pval$ given the current values for the program variables, then $s$ and $s'$ have probability $\pval$ and $1-\pval$ of being selected, respectively.  In many cases $e$ will be a constant, but in general it can be an \emph{expression over the state variables} (i.e., $e\in \mathit{Exp}$), so its semantics will be an real-valued function.  Sequential composition, $\code{skip}$, assignment, $\code{if-then-else}$ and $\code{while}$ are standard (e.g., \cite{dijkstra76discipline}).

The semantics of \pGCL\ programs $s$ is defined as an MDP $\MDP_s$ (cf.~\cite{GretzKM14}),
and its executions are captured by the partial transition
probability function for a given policy $\pi$, which induces the relation
$\xrightarrow{p}_\pi$ for some probability $p$, (\cref{eq:transition-relation-iff-partial-transition-function}). A state $\sigma$ of $\MDP_s$ is a pair of a
\emph{valuation} and a \emph{program}, so
$\sigma=\langle \epsilon, s\rangle$ where the valuation $\epsilon$ is
a mapping from all the program variables in $s$ to concrete values (sometimes we omit the program part, if it is unambiguous in the context).
The state $\langle \epsilon, s\rangle$ represents an \emph{initial
  state} of the program $s$ given some initial valuation $\epsilon$
and the state $\langle\epsilon, \code{skip}\rangle$ represents a
\emph{final state} in which the program has terminated with the
valuation $\epsilon$.

The rules defining the partial transition probability function for a
given policy $\pi$ are shown in Fig.~\ref{fig:semantics}.  We denote
by
$\langle\epsilon,s\rangle \xrightarrow{\pval}_\pi
\langle\epsilon',s'\rangle$
the transition from $\langle\epsilon,s\rangle$ to
$\langle\epsilon',s'\rangle$ by action
$\alpha=\pi(\langle\epsilon,s\rangle)$, where $\pval$ is the resulting
probability.  Note that for demonic choice, the policy $\pi$ fixes the
action choice between the distributions $0,1$ and $1,0$; for all
other statements, there is already a single successor distribution.
The transitive closure of this relation, denoted
$\langle\epsilon_0,s_0\rangle \smash{\xrightarrow{\pval}}^\ast_\pi
\langle\epsilon_n,s_n\rangle$,
expresses that there is a sequence of zero or more such transitions
from $\langle\epsilon_0,s_0\rangle$ to $\langle\epsilon_n,s_n\rangle$
with corresponding actions $\alpha_i =\pi(\epsilon_i,s_i)$ and
probability $\pval_i$ for $0<i\leq n$, such that
$\pval=1 \cdot \pval_1\cdots \pval_n$.

Remark that the rules in \cref{fig:semantics} allow programs to get stuck, for instance if an expression $e$ evaluates to a value outside $[0,1]$ (\textsc{ProbChoice}).  Since we are interested in partial correctness, we henceforth rule out such programs and only consider programs that successfully reduce to a single \code{skip} statement under all policies with probability 1.
\looseness = -1

\begin{figure}[t]

  \begin{equation*}
    \begin{array}{c}
      \ncondrule{Assign}{\epsilon'=\epsilon[x\mapsto \epsilon(e)]}{\langle \epsilon, x:=e\rangle \xrightarrow{1}_\pi  \langle \epsilon', \code{skip}\rangle}
      \\\\\\
      \ncondrule{Composition1}{
        \langle \epsilon, s_1 \rangle\xrightarrow{p}_\pi  \langle \epsilon',s_2\rangle}{\langle \epsilon, \code{skip}; s_1 \rangle
      \xrightarrow{p}_\pi  \langle \epsilon',s_2\rangle}
      \\\\\\
      \ncondrule{Composition2}{\langle \epsilon, s_1 \rangle \xrightarrow{p}_\pi  \langle \epsilon',s_2\rangle}{\langle \epsilon, s_1; s \rangle
      \xrightarrow{p}_\pi  \langle \epsilon',s_2;s\rangle}
    \end{array}
    \hfill
    \begin{array}{c}
      \ncondrule{ProbChoice1}{\epsilon(e)=p \quad 0 \leq p \leq 1}{\langle \epsilon, s_1\pchoice{e} s_2\rangle \xrightarrow{p}_\pi  \langle \epsilon,s_1\rangle}\\\\
      \ncondrule{ProbChoice2}{\epsilon(e)=p \quad 0 \leq p \leq 1}{\langle \epsilon, s_1\pchoice{e} s_2\rangle \xrightarrow{1-p}_\pi  \langle \epsilon,s_2\rangle}\\\\
      \ncondrule{While1}{\epsilon(e)=\textit{true}}{\langle\epsilon, \code{while}\: e\: \{
        s\}\rangle \xrightarrow{1}_\pi \langle\epsilon, s;\code{while}\: e\: \{
      s\}\rangle}\\\\
      \ncondrule{While2}{\epsilon(e)=\textit{false}}{\langle\epsilon, \code{while}\: e\: \{
      s\}\rangle \xrightarrow{1}_\pi \langle \epsilon, \code{skip}\rangle}\\[-6pt]\\
    \end{array}
    \hfill
    \begin{array}{c}
      \ncondrule{DemChoice}{ i\in\{1,2\}\\ \pi\langle\epsilon, s_1 \ndchoice  s_2\rangle = s_i}{\langle \epsilon, s_1 \ndchoice s_2\rangle \xrightarrow{1}_\pi  \langle \epsilon',s_i\rangle}\\\\
      \ncondrule{If1}{\epsilon(e)=\textit{true}}{\langle\epsilon, \code{if}\: e\: \{ s_1\}\ \code{else}\ \{ s_2\} \rangle\\  \xrightarrow{1}_\pi \langle\epsilon, s_1\rangle}\\\\
      \ncondrule{If2}{\epsilon(e)=\textit{false}}{\langle\epsilon,
        \code{if}\: e\: \{ s_1\}\ \code{else}\ \{ s_2\} \rangle
      \\\xrightarrow{1}_\pi \langle \epsilon, s_2\rangle}
    \end{array}
  \end{equation*}

  \vspace{-1mm}

  \caption{\label{fig:semantics}An MDP-semantics for pGCL.}

  \vspace{-2mm}

\end{figure}

\section{Probabilistic Dynamic Logic}%
\label{sec:pdl}

\paragraph{Formulae \& Satisfiability.}

Given sets \(X\) of program variables and \(L\) of logical variables disjoint from \(X\), let \LIT\ denote the well-formed atomic formulae built using constants, program and logical variables.  For every \(l\!\in\!L\), let \dom{l}\ denote the domain of \(l\).
We extend valuations to also map logical variables \(l\in L\) to values in \dom{l} and let \(\epsilon\models_\LIT \phi\) denote standard satisfaction, expressing that  \(\phi\in\LIT\) holds in valuation \(\epsilon\).
\looseness = -1

The formulae of probabilistic dynamic logic (\PDL) are defined inductively as the smallest set generated by the following grammar:
\newcommand{\gmid}{~\mid~}
\begin{equation}\label{eq:pdl-syntax}
  \phi \quad ::= \quad \LIT
  \gmid \neg \phi
  \gmid \phi_1\land \phi_2
  \gmid \forall l \cdot \phi
  \gmid \dlbox{s}_\pfun\:\phi
\end{equation}
where \(\phi\) ranges over \PDL formulae, \(l\! \in\! L\) over logical variables, $s$ is a \pGCL program with variables in \(X\), and \pfun\ is an expectation assigning values in $[0,1]$ to initial states of the program \(s\).  The logical operators \(\to\), \(\lor\) and \(\exists\) are derived in terms of \(\neg\), \(\wedge\) and \(\forall\) as usual.
\looseness = -1

The last operator in \cref{eq:pdl-syntax} is known as the box-operator in dynamic logics, but now we give it a probabilistic interpretation along with the name ``p-box.''  Given a pGCL program $s$, we write $[s]_\pfun\:\phi$ to express that the expectation that a formula $\phi$ holds after successfully executing $s$ is at least \pfun; i.e., the function $\pfun$ represents the expectation for $\phi$ in the current state of $\MDP_s$ using $\embed\phi$ as the reward function (see \cref{sec:preliminaries}). For the reader familiar with the CTL/PCTL terminology, the p-box formulae are path formulae, and all other formulae are state formulae.

We define semantics of \emph{well-formed formulae} in \PDL, so formulae with no free logical variables---all occurrences of logical variables are captured by a quantifier.  The definition extends the standard satisfaction relation of dynamic logic \cite{harel00dynlog} to the probabilistic case:
\looseness = -1
\begin{definition}[Satisfaction of \PDL Formulae]
  \label{def:satisfaction}
  Let \(\phi\) be a well-formed \PDL formula, $\pi$ range over policies, \(l \! \in \! L\), \(\pfun : \State \to [0,1]\) be an expectation lower bound, and $\epsilon$ be a valuation defined for all variables mentioned in $\phi$.  The \emph{satisfiability} of a formula $\phi$ in a model $\epsilon$, denoted $\epsilon\models \phi$, is defined inductively as follows:
  \begin{align*}
    & \epsilon \models \phi
    & \textnormal{iff}\quad
    & \epsilon \models_\LIT \phi \quad\textnormal{for}\quad \phi\in \LIT
    \\
    & \epsilon \models \phi_1\land \phi_2
    & \textnormal{iff}\quad
    & \epsilon \models \phi_1 \quad\textnormal{and}\quad \epsilon \models \phi_2
    \\
    & \epsilon \models \neg \phi
    & \textnormal{iff}\quad
    & \textnormal{not } \epsilon \models \phi
    \\
    & \epsilon \models \forall l \cdot  \phi
    & \textnormal{iff}\quad
    & \epsilon \models \phi \subst{l}{v} \textnormal{ for each } v\in \dom{l}
    \\
    & \epsilon \models \dlbox{s}_\pfun \phi
    & \textnormal{iff}\quad
    & \pfunepsilon \leq \Expectation{\embed \phi}
    \textnormal{ where the expectation is taken in $\MDP_s$}
  \end{align*}
\end{definition}

\noindent
For $\phi\in \LIT$, \(\models_\LIT\) can be used to
check satisfaction just against the valuation of program variables
since \(\phi\) is well-formed.  In the case of universal
quantification, the substitution replaces logical variables with
constants.  The last case (p-box) is implicitly recursive, since the
characteristic function \embed\phi\ refers to the satisfaction of
\(\phi\) in the final states of \(s\).

The satisfaction of a p-box formula $\dlbox{s}_\pfun\:\phi$ captures a
lower bound on the probability of \(\phi\) holding after the program
\(s\). Consequently, \PDL supports specification and reasoning about
probabilistic reachability properties in almost surely terminating
programs.

It is convenient to omit the valuation $\epsilon$ from the
satisfaction judgement, meaning that the judgement holds for all
valuations (validity):

\begin{equation}
  \models \pbox\pfun \, \phi \quad \text{iff} \quad
  \epsilon \models \pbox\pfun\, \phi \quad \text{for all valuations } \epsilon
\end{equation}\label{eq:validity}
\section{The p-box Modality and Logical Connectives}%
\label{sec:distribution}

We begin our investigation of \PDL by exploring how the p-box operator
interacts with different expectations and the other connectives of \PDL.

In a proof system, weakening is useful to allow adjusting proven facts
to a format of a syntactic proof rule.  Since all operators of \PDL,
with the exception of p-box, behave like in first order logic, the
usual qualitative weakening properties apply for these operators at
the top-level. For instance, $\phi_1 \land \phi_2$ can be weakened to
$\phi_1$.  These properties follow directly from
\cref{def:satisfaction}.  The following proposition states the key
properties for p-box: \looseness = -1

\begin{proposition}[Weakening]
  \label{prop:weakening}
  Let \(\epsilon\) stand for a valuation, \( \func p, \func 0 \in \State \rightarrow [0, 1] \) be expectation lower bounds, $s$ a pGCL program, and \( \phi \in \PDL \). Then:

  \begin{enumerate}

    \item Universal lower bound: \( \epsilon \models [s]_\func{0}\, \phi \) \label{item:universal-lb}

    \item Quantitative weakening: $\epsilon \models [s]_{\pfun_1}\, \phi$  then $\epsilon \models [s]_{\pfun_2}\, \phi \text{ if } \pfun_2 \leq \pfun_1$ everywhere \label{item:quant-weakening}

    \item Weakening conjunctions: $\epsilon \models \pbox\pfun\, (\phi_1 \land \phi_2)$  then $\epsilon \models \pbox\pfun \, \phi_i \text{ ~for } i = 1, 2$ \label{item:conj-weakening}

    \item Qualitative weakening: $\epsilon \models \pbox\pfun\, \phi_1$ and \( \models \phi_1 \implies \phi_2 \)  then \( \epsilon \models \pbox\pfun\, \phi_2 \) \enspace .\label{item:qual-weakening}

  \end{enumerate}

\end{proposition}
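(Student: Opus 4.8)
The plan is to reduce every item to the defining clause of the p-box modality in \cref{def:satisfaction}, namely that $\epsilon \models \dlbox{s}_{\func q}\,\psi$ holds iff $\func q(\epsilon) \le \Expectation{\embed{\psi}}$ with the expectation taken in $\MDP_s$, and then discharge each item by elementary manipulation of $\le$ together with one auxiliary fact: \emph{monotonicity of the expectation in its reward argument}. Concretely, if $\func q, \func q' : \State \to [0,1]$ are reward functions with $\func q(\sigma) \le \func q'(\sigma)$ at every final state $\sigma$ of $\MDP_s$ (values at non-final states are irrelevant, since rewards vanish there), then $\Expectation{\func q} \le \Expectation{\func q'}$. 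This follows directly from \cref{eq:expectation}: for each fixed policy $\pi$ the expected reward is a sum of nonnegative terms $\Pr(\overbar\sigma)\,\func q(\final{\overbar\sigma})$, each dominated termwise by $\Pr(\overbar\sigma)\,\func q'(\final{\overbar\sigma})$, so $\ExpectedV{\func q} \le \ExpectedV{\func q'}$; since this holds for every $\pi$, the number $\inf_\pi \ExpectedV{\func q}$ is a lower bound of all the values $\ExpectedV{\func q'}$ and therefore does not exceed their infimum, i.e.\ $\Expectation{\func q} \le \Expectation{\func q'}$. Taking $\func q = \func 0$ gives in particular $\Expectation{\func q'} \ge 0$ for every reward $\func q'$.

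The first two items then fall out immediately. For the universal lower bound, $\func 0(\epsilon) = 0 \le \Expectation{\embed{\phi}}$ by the nonnegativity remark, so $\epsilon \models [s]_{\func 0}\,\phi$. For quantitative weakening, $\epsilon \models [s]_{\pfun_1}\,\phi$ unfolds to $\pfun_1(\epsilon) \le \Expectation{\embed{\phi}}$, and $\pfun_2 \le \pfun_1$ everywhere gives $\pfun_2(\epsilon) \le \pfun_1(\epsilon) \le \Expectation{\embed{\phi}}$, i.e.\ $\epsilon \models [s]_{\pfun_2}\,\phi$. I would moreover note that weakening conjunctions is just the instance of qualitative weakening with antecedent formula $\phi_1 \land \phi_2$ and consequent $\phi_i$, since $\models (\phi_1 \land \phi_2) \implies \phi_i$ holds directly by \cref{def:satisfaction}; so only the last item genuinely remains.

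For qualitative weakening the crux is to convert the hypothesis $\models \phi_1 \implies \phi_2$ into a pointwise inequality of characteristic functions. Unfolding the definition of validity and \cref{def:satisfaction}, this hypothesis says that $\epsilon' \models \phi_1$ implies $\epsilon' \models \phi_2$ for every valuation $\epsilon'$. Every final state of $\MDP_s$ has the form $\langle \epsilon', \code{skip}\rangle$, and by construction $\embed{\phi_j}(\langle\epsilon',\code{skip}\rangle) = 1$ exactly when $\epsilon' \models \phi_j$; hence $\embed{\phi_1}(\sigma) \le \embed{\phi_2}(\sigma)$ at every final state $\sigma$. Monotonicity of the expectation then yields $\Expectation{\embed{\phi_1}} \le \Expectation{\embed{\phi_2}}$, and combining with $\pfun(\epsilon) \le \Expectation{\embed{\phi_1}}$ (which is exactly $\epsilon \models \pbox\pfun\,\phi_1$) gives $\pfun(\epsilon) \le \Expectation{\embed{\phi_2}}$, i.e.\ $\epsilon \models \pbox\pfun\,\phi_2$. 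I do not expect any single step to be genuinely hard; the two spots that warrant care are the passage through the infimum in the monotonicity argument (it should be justified via the greatest-lower-bound property rather than merely asserted), and the fact that $\phi_1$ and $\phi_2$ may themselves contain p-box sub-formulae — this is harmless, because the argument only consults the denotation of $\embed{\phi_j}$ at final states and never the syntactic shape of $\phi_j$, so the implicit recursion in \cref{def:satisfaction} poses no circularity.
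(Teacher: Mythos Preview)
Your proposal is correct and follows essentially the same route as the paper: both reduce item~3 to item~4 via the tautology $\models (\phi_1\land\phi_2)\implies\phi_i$, handle items~1 and~2 by unfolding \cref{def:satisfaction} together with nonnegativity of expectations, and prove item~4 by converting $\models\phi_1\implies\phi_2$ into the pointwise inequality $\embed{\phi_1}\le\embed{\phi_2}$ and then invoking monotonicity of $\Expectation{\cdot}$ in its reward argument. Where the paper packages the monotonicity and nonnegativity facts into an auxiliary lemma (\cref{lemma:expectation-properties}), you supply the infimum argument inline and are a bit more explicit about evaluating characteristic functions only at final states; these are presentational differences, not substantive ones.
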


\noindent
The first point states that there is a limit to the usefulness of weakening the expectation: if you cannot guarantee that the lower bound is positive, then you do not have any information at all. A zero lower-bound would hold for any property. The second property is a probabilistic variant of weakening, which follows directly from the last case of \cref{def:satisfaction}; the lower bound on an expectation can always be lowered.  The last two properties are the probabilistic counterparts of weakening in standard (non-probabilistic) dynamic logic; the third property is syntactic for conjunction, the last one is general.
\looseness = -1

When building proofs with \PDL, the other direction of reasoning seems
more useful: we would like to be able to derive a conjunction
from two independently concluded facts.  For state
formulae, this holds naturally, like in first-order logic.  For
p-box formulae, we would like to use the expectations \(\pfun_i\) of
two formulae \(\phi_i\) to draw conclusions about the expectation that
their conjunction holds.  It seems tempting to translate the
intuitions from the Boolean lattice to real numbers, and to suggest
that a minimum of the expectations for both formulae is a lower bound
for their conjunction.  To develop some intuition, let us first
consider an incorrect proposal using the following counterexample:
\looseness = -1

\begin{example}\label{ex4}
  Consider the program $\sdie$, modeling a six-sided fair die:
  \begin{equation}
    \sdie \quad ::= \quad
      \code{x:=1} \pchoice{\nicefrac 16} (
      \code{x:=2} \pchoice{\nicefrac 15} (
      \code{x:=3} \pchoice{\nicefrac 14} (
      \code{x:=4} \pchoice{\nicefrac 13} (
      \code{x:=5} \pchoice{\nicefrac 12}
      \code{x:=6}
      ))))
  \end{equation}
  Let `odd' be an atomic formula stating that a value is odd, and `prime'
  an atomic formula stating that it is prime.  Since the die is fair,
  the expectations for each of these after \sdie are:
  \looseness = -1
  \begin{equation}
    \models [\sdie]_\func{\nicefrac 12}\, \textrm{odd} (x) \quad \quad
    \models [\sdie]_\func{\nicefrac 12}\, \textrm{prime} (x) \quad \quad
  \end{equation}
  The minimum of the two expectations is a constant function which
  equals \( \func{\nicefrac 12} \) everywhere, but the expectation bound in \( \pbox\pfun (\textrm{odd} (x) \land \textrm{prime} (x)) \) can be at most
  \( \func{\nicefrac 13} \) since only two outcomes ($x\mapsto 3$ and
  $x\mapsto 5$) satisfy both predicates.  Effectively, even if
  \(\epsilon\models \pbox{\pfun_1} \, \phi_1\) and
  \(\epsilon\models \pbox{\pfun 2} \, \phi_2\) hold, we do not
  necessarily have
  \( \epsilon \models \pbox{\min(\pfun_1, \pfun_2)} \, \phi_1 \land
  \phi_2 \).  The reason is that the expectation bounds measure what is the lower bound on satisfaction of a property, but not where in the execution space this probability mass is placed.  There is not enough information to see to what extent the two properties are overlapping. \qed

\end{example}

\noindent
Similarly,
\( \pfun(\epsilon) = \pfun_1 (\epsilon) \pfun_2 (\epsilon) \) is not a
good candidate in Ex.~\ref{ex4}, since it is only guaranteed to be a
lower bound for a conjunction when \(\phi_i\) are independent
events. Unless \(\pfun_1 \! = \! \pfun_2 \! = \! \func 1\), combining proven facts
with conjunction (or disjunction) weakens the expectation: \looseness
= -1

\begin{theorem}
  \label{thm:distributive-laws}
  Let \(\epsilon\)
  be a valuation, \( \pfun, \pfun_1, \pfun_2 \in \State \rightarrow [0, 1] \)
  expectation lower bounds, $s$ a pGCL program, and
  \( \phi_1, \phi_2 \in \PDL \). Then:

  \begin{enumerate}

    \item p-box conjunction:\label{item:p-box-conjunction}
      if \( \epsilon \models \pbox {\pfun_1}\, \phi_1 \) and \( \epsilon \models \pbox {\pfun_2}\, \phi_2 \), then \( \epsilon \models \pbox\pfun \, ( \phi_1 \land \phi_2 ) \) where \( \pfun = \max ( \pfun_1\! + \! \pfun_2 -1, 0 )\) everywhere.

    \item p-box disjunction:\label{item:p-box-disjunction-1}
      if \( \epsilon \models \pbox{\pfun_1}\, \phi_1 \) or \( \epsilon \models \pbox{\pfun_2}\, \phi_2 \), then \( \epsilon \models \pbox \pfun \, ( \phi_1 \lor \phi_2 ) \) where \(\pfun = \min (\pfun_1, \pfun_2)\) everywhere.

  \end{enumerate}

\end{theorem}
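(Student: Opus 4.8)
The plan is to unfold the p-box clause of \cref{def:satisfaction} for both items and reduce each claim to a pointwise inequality between characteristic functions, combined with elementary properties of the expectation operator. Recall that $\epsilon \models \pbox\pfun\,\psi$ holds iff $\pfun(\epsilon) \le \mathbf{E}_\epsilon(\charfun\psi)$, where the expectation is taken in $\MDP_s$, i.e.\ $\mathbf{E}_\epsilon(\charfun\psi) = \inf_\pi \mathbb{E}_{\epsilon,\pi}(\charfun\psi)$ with $\mathbb{E}_{\epsilon,\pi}(r) = \sum_{\overline\sigma \in \textrm{paths}_\pi(\epsilon)} \Pr(\overline\sigma)\, r(\final{\overline\sigma})$ (identifying $\epsilon$ with the initial state $\langle\epsilon,s\rangle$). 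Throughout I will use three routine facts about $\mathbb{E}_{\epsilon,\pi}$: monotonicity (if $r \le r'$ pointwise on final states then $\mathbb{E}_{\epsilon,\pi}(r) \le \mathbb{E}_{\epsilon,\pi}(r')$), additivity $\mathbb{E}_{\epsilon,\pi}(r_1 + r_2) = \mathbb{E}_{\epsilon,\pi}(r_1) + \mathbb{E}_{\epsilon,\pi}(r_2)$, and $\mathbb{E}_{\epsilon,\pi}(\func 1) = 1$; the latter two follow from the standing almost-sure-termination assumption, which gives $\sum_{\overline\sigma} \Pr(\overline\sigma) = 1$ and, together with boundedness of rewards in $[0,1]$, absolute convergence of the sums. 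I will also use superadditivity of the infimum, $\inf_\pi (f_\pi + g_\pi) \ge \inf_\pi f_\pi + \inf_\pi g_\pi$.

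For item~\ref{item:p-box-conjunction} I would first record the Fr\'echet-style bound: at every state $\sigma$, $\charfun{\phi_1 \land \phi_2}(\sigma) = \min(\charfun{\phi_1}(\sigma), \charfun{\phi_2}(\sigma)) \ge \max(\charfun{\phi_1}(\sigma) + \charfun{\phi_2}(\sigma) - 1,\, 0)$, since $\min(a,b) = a + b - \max(a,b) \ge a + b - 1$ and $\min(a,b) \ge 0$ for $a, b \in [0,1]$. Fixing $\pi$ and applying monotonicity and additivity of $\mathbb{E}_{\epsilon,\pi}$, while cancelling the constant via $\mathbb{E}_{\epsilon,\pi}(\func 1) = 1$, gives both $\mathbb{E}_{\epsilon,\pi}(\charfun{\phi_1 \land \phi_2}) \ge \mathbb{E}_{\epsilon,\pi}(\charfun{\phi_1}) + \mathbb{E}_{\epsilon,\pi}(\charfun{\phi_2}) - 1$ and $\mathbb{E}_{\epsilon,\pi}(\charfun{\phi_1 \land \phi_2}) \ge 0$. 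Taking $\inf_\pi$ of the first inequality and using superadditivity of the infimum yields $\mathbf{E}_\epsilon(\charfun{\phi_1 \land \phi_2}) \ge \mathbf{E}_\epsilon(\charfun{\phi_1}) + \mathbf{E}_\epsilon(\charfun{\phi_2}) - 1$, while the second survives the infimum as $\mathbf{E}_\epsilon(\charfun{\phi_1 \land \phi_2}) \ge 0$. Feeding in the two hypotheses $\pfun_i(\epsilon) \le \mathbf{E}_\epsilon(\charfun{\phi_i})$ gives $\mathbf{E}_\epsilon(\charfun{\phi_1 \land \phi_2}) \ge \max(\pfun_1(\epsilon) + \pfun_2(\epsilon) - 1,\, 0) = \pfun(\epsilon)$, which is exactly $\epsilon \models \pbox\pfun\,(\phi_1 \land \phi_2)$.

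For item~\ref{item:p-box-disjunction-1} the argument is shorter and needs only monotonicity. From the pointwise bound $\charfun{\phi_1 \lor \phi_2}(\sigma) = \max(\charfun{\phi_1}(\sigma), \charfun{\phi_2}(\sigma)) \ge \charfun{\phi_i}(\sigma)$ one gets, by monotonicity of $\mathbb{E}_{\epsilon,\pi}$ and of $\inf_\pi$, that $\mathbf{E}_\epsilon(\charfun{\phi_1 \lor \phi_2}) \ge \mathbf{E}_\epsilon(\charfun{\phi_i})$ for $i = 1, 2$. If the hypothesis holds because $\epsilon \models \pbox{\pfun_1}\,\phi_1$, then $\mathbf{E}_\epsilon(\charfun{\phi_1 \lor \phi_2}) \ge \pfun_1(\epsilon) \ge \min(\pfun_1,\pfun_2)(\epsilon)$; symmetrically if it holds because $\epsilon \models \pbox{\pfun_2}\,\phi_2$. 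Either way $\epsilon \models \pbox\pfun\,(\phi_1 \lor \phi_2)$ with $\pfun = \min(\pfun_1, \pfun_2)$.

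The one step I would spell out carefully — and the only non-trivial point — is the handling of the $-1$ in item~\ref{item:p-box-conjunction}: moving the constant out of $\mathbb{E}_{\epsilon,\pi}$ over a possibly countably infinite path space, and asserting $\mathbb{E}_{\epsilon,\pi}(\func 1) = 1$, both rely on almost-sure termination together with boundedness of rewards; without the standing termination assumption the constant need not integrate to $1$ and the Fr\'echet bound degrades. Everything else — the elementary $\min/\max$ inequalities on $\{0,1\}$-valued functions and superadditivity of the infimum — is routine. It is worth noting, in view of Ex.~\ref{ex4}, that $\max(\pfun_1 + \pfun_2 - 1,\, 0)$ is the best bound obtainable from $\pfun_1,\pfun_2$ alone, so no tighter distributive law over $\land$ is available.
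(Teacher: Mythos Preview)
Your proof is correct and follows essentially the same route as the paper: for item~\ref{item:p-box-conjunction} the paper also uses the Fr\'echet-type pointwise bound \(\embed{\phi_1}+\embed{\phi_2}-1\le\embed{\phi_1\land\phi_2}\) together with superadditivity of \(\Expectation[]{}\) under addition (their \cref{lemma:expectation-properties}.\ref{item:expectation-monotonic-plus}) and a constant-shift lemma (\cref{lemma:expectation-properties}.\ref{item:expectation-add-const}), while for item~\ref{item:p-box-disjunction-1} it uses exactly your monotonicity argument via qualitative weakening. The only cosmetic difference is that you work at the level of \(\ExpectedV[\epsilon,\pi]{}\) for a fixed policy (where additivity is exact and \(\ExpectedV[\epsilon,\pi](){\func 1}=1\) by almost-sure termination) and then take the infimum, whereas the paper packages these steps into lemmas about \(\Expectation[]{}\) beforehand; your explicit flagging of the dependence on almost-sure termination is a nice addition.
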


\noindent
Note the asymmetry between these cases: reasoning about conjunctions
of low probability properties using
\cref{thm:distributive-laws}.\ref{item:p-box-conjunction} is
inefficient, and quickly arrives at the lower bound expectation
\(\func{0}\), which, as observed in \cref{prop:weakening}, holds
vacuously. If both properties have an expected probability lower than
\(\nicefrac12\), then \PDL cannot really see (in a compositional manner)
whether there is any chance that they can be satisfied
simultaneously. In contrast, compositional reasoning about
disjunctions makes sense both for low and high probability events.
This is a consequence of using lower bounds on expectations. 
The bounds in~\cref{thm:distributive-laws} are consistent with prior work by Baier et al. on LTL verification of probabilistic systems~\cite{BaierKN99}.

The qualitative non-probabilistic specialization of
\cref{thm:distributive-laws}.\ref{item:p-box-conjunction} behaves reasonably: when  \(\phi_1\) or \(\phi_2\) hold almost surely, then the
theorem reduces to a familiar format: \looseness = -1
\begin{equation}
  \text{if } \epsilon \models \pbox{\pfun} \phi_1
  \text{ and } \epsilon \models \pbox{\func1} \phi_2
  \text{ then } \epsilon \models \pbox{\pfun} (\phi_1 \land \phi_2)
\end{equation}
\noindent

\begin{theorem}
  \label{thm:quantifiers-commute}
  Let \(\epsilon\) be a valuation, \( \func p \in \State \!\to\! [0, 1] \) an expectation lower bound, \(s\) a pGCL program, and \( \phi \in \PDL \) a well-formed formula.
  \begin{enumerate}

    \item If \(\epsilon \models \pbox\pfun \, \forall l \cdot\phi\)  ~then~  \(\epsilon \models \forall l \cdot \pbox\pfun \, \phi\), but not the other way around in general.
      \label{item:universal-commutes}

    \item If \(\epsilon \models \exists l \cdot \pbox\pfun \, \phi  \text{ ~then~ } \epsilon \models \pbox\pfun \, \exists l \cdot\phi\) but not the other way around in general.
      \label{item:existential-commutes}

  \end{enumerate}
\end{theorem}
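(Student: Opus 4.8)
The plan is to derive both implications from one monotonicity fact together with two elementary identities about characteristic functions, and to refute the two converses with a single fair-coin program. The monotonicity fact is that the expectation operator $\Expectation[\langle\epsilon,s\rangle]{r}$ of \cref{eq:expectation}, taken in $\MDP_s$, is monotone in its reward argument: if $r \le r'$ pointwise then, for every policy $\pi$, multiplying by the non-negative path probabilities and summing over the $\pi$-paths from $\langle\epsilon,s\rangle$ preserves $\le$, and taking the infimum over $\pi$ preserves it as well, so $\Expectation[\langle\epsilon,s\rangle]{r} \le \Expectation[\langle\epsilon,s\rangle]{r'}$. The two identities are $\embed{\forall l \cdot \phi} = \inf_{v \in \dom{l}} \embed{\phi\subst{l}{v}}$ and $\embed{\exists l \cdot \phi} = \sup_{v \in \dom{l}} \embed{\phi\subst{l}{v}}$, pointwise on states; they are immediate from the quantifier clause of \cref{def:satisfaction}, since $\embed{\cdot}$ is $\{0,1\}$-valued. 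I will also use that a logical variable occurs neither in $s$ nor in $\pfun$, so that $(\pbox\pfun\phi)\subst{l}{v} = \pbox\pfun(\phi\subst{l}{v})$, and assume $\dom{l} \neq \emptyset$ (if it is empty both statements are trivial, since $\forall l \cdot \psi$ then holds everywhere and $\exists l \cdot \psi$ nowhere).

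For the first implication, suppose $\epsilon \models \pbox\pfun\,\forall l \cdot \phi$, i.e.\ $\pfun(\epsilon) \le \Expectation[\langle\epsilon,s\rangle]{\embed{\forall l \cdot \phi}}$. From $\embed{\forall l \cdot \phi} \le \embed{\phi\subst{l}{v}}$ and monotonicity we get $\pfun(\epsilon) \le \Expectation[\langle\epsilon,s\rangle]{\embed{\phi\subst{l}{v}}}$, that is $\epsilon \models \pbox\pfun(\phi\subst{l}{v})$, for every $v \in \dom{l}$; this is exactly $\epsilon \models \forall l \cdot \pbox\pfun\,\phi$. The second implication is the mirror-image argument with $\sup$: if $\epsilon \models \exists l \cdot \pbox\pfun\,\phi$, choose $v$ with $\epsilon \models \pbox\pfun(\phi\subst{l}{v})$; then $\pfun(\epsilon) \le \Expectation[\langle\epsilon,s\rangle]{\embed{\phi\subst{l}{v}}} \le \Expectation[\langle\epsilon,s\rangle]{\embed{\exists l \cdot \phi}}$, i.e.\ $\epsilon \models \pbox\pfun\,\exists l \cdot \phi$.

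For the two non-implications, one fair coin settles both. Let $c ::= \code{x:=0} \pchoice{\nicefrac 12} \code{x:=1}$; it terminates almost surely and has no demonic choice, so $\Expectation[\langle\epsilon,c\rangle]{r}$ is just the average of $r$ over the two final states, the one with $x = 0$ and the one with $x = 1$. Let $l$ range over a domain containing $0$ and $1$. For the first implication, take $\phi \equiv x \neq l$ and $\pfun = \func{\nicefrac 12}$: for every fixed $v$ we have $\Expectation[\langle\epsilon,c\rangle]{\embed{x \neq v}} \ge \nicefrac 12$, hence $\epsilon \models \forall l \cdot \pbox[c]{\func{\nicefrac 12}}(x \neq l)$; but $\forall l \cdot (x \neq l)$ is unsatisfiable, so $\embed{\forall l \cdot (x \neq l)} = \func 0$ and $\epsilon \not\models \pbox[c]{\func{\nicefrac 12}}\,\forall l \cdot (x \neq l)$. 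For the second implication, dually, take $\phi \equiv x = l$ and $\pfun = \func 1$: $\exists l \cdot (x = l)$ is valid, so $\epsilon \models \pbox[c]{\func 1}\,\exists l \cdot (x = l)$; but $\Expectation[\langle\epsilon,c\rangle]{\embed{x = v}} \le \nicefrac 12 < 1$ for every fixed $v$, so $\epsilon \not\models \pbox[c]{\func 1}(x = v)$ for any $v$, and therefore $\epsilon \not\models \exists l \cdot \pbox[c]{\func 1}(x = l)$.

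The proof itself is short and mechanical; the content worth highlighting is the conceptual reason the converses fail. The expectation operator commutes with neither $\inf$ nor $\sup$ over $\dom{l}$: it satisfies only $\mathbf{E}(\inf_v f_v) \le \inf_v \mathbf{E}(f_v)$ and $\sup_v \mathbf{E}(f_v) \le \mathbf{E}(\sup_v f_v)$, and both inequalities are strict in the coin examples. Hence exactly the ``easy'' half of each would-be equivalence survives, and the probabilistic, lower-bound semantics genuinely breaks the quantifier commutation that holds unconditionally in ordinary first-order dynamic logic, where $l$ not occurring in $s$ lets the box and $\forall l$ commute both ways.
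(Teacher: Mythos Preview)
Your proof is correct. The positive direction of item~1 is essentially the paper's argument: both use the syntactic equality $(\pbox\pfun\phi)\subst{l}{v}=\pbox\pfun(\phi\subst{l}{v})$ together with $\embed{\forall l\cdot\phi}\le\embed{\phi\subst{l}{v}}$ and monotonicity of the expectation.

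For the positive direction of item~2 you take a genuinely shorter route. The paper unfolds $\exists$ as $\neg\forall\neg$, pushes the negation through the satisfaction relation, converts $1-\inf_\pi(\cdot)$ into $\sup_\pi(1-\cdot)$ via \cref{lemma:expectation-properties}.\ref{item:inf-sup}, uses $\embed{\neg\psi}=1-\embed\psi$, and only then applies the domination $\embed{\forall l\cdot\neg\phi}\le\embed{\neg\phi\subst{l}{v}}$ before re-folding everything. You instead observe directly that $\embed{\phi\subst{l}{v}}\le\embed{\exists l\cdot\phi}$ and apply monotonicity once. Both are sound; your argument avoids the $\inf$/$\sup$ gymnastics entirely and makes the symmetry with item~1 transparent. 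The paper's detour buys nothing here---it appears to be an artifact of treating $\exists$ as derived syntax rather than reasoning about its characteristic function directly.

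Your counterexamples also differ slightly. The paper uses a fair probabilistic coin with $\phi\equiv(x=l)$ and $\pfun=\func{\nicefrac12}$ for item~1, and a \emph{non-deterministic} coin with $\pfun=\func1$ for item~2. You use one fair probabilistic coin for both, with $\phi\equiv(x\neq l)$ for item~1 and $\phi\equiv(x=l)$ for item~2. Both sets work; yours has the minor advantage of showing that the failure of the existential converse does not depend on demonic choice, which sharpens your closing remark that the obstruction is purely the non-commutation of expectation with $\inf_v$ and $\sup_v$.
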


\noindent
The essence of the above two properties lies in the fact that quantifiers in \PDL only affect logical variables, programs cannot access logical variables, and we do not allow quantification over expectation variables.

In a deductive proof system, one works with abstract states, not just concrete states.  A state abstraction can be introduced as a precondition, a \PDL property that captures the essence of an abstraction, and is satisfied by all the abstracted states sharing the property. If an abstract property is a precondition for a proof, it is naturally introduced using implication.  However, implication is unwieldy in an expectation calculus, so it is practical to be able to eliminate it in the proof machinery. The following theorem explains how a precondition can be folded into an expectation function:
\begin{theorem}[Implication Elimination]
  \label{thm:implication-elimination}
  Let \(s\) be a \pGCL program, \(\phi_i\) be \PDL formulae, and \pfun\ a lower-bound function for expectations. Then:
  \[
    \models \phi_1 \implies (\pbox\pfun \, \phi_2)
    \quad \text{iff} \quad
    \models \pbox{\pfun \downarrow \phi_1} \phi_2
  \]
\end{theorem}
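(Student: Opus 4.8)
The plan is to reduce both sides of the stated biconditional, via the definition of validity and the p-box clause of \cref{def:satisfaction}, to one and the same pointwise condition on valuations, and then to discharge a two-case analysis on whether the precondition $\phi_1$ holds in the valuation at hand.

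First I would unfold validity. The left-hand side $\models \phi_1 \implies (\pbox\pfun\, \phi_2)$ holds iff for every valuation $\epsilon$ we have $\epsilon \models \phi_1 \implies (\pbox\pfun\, \phi_2)$; since $\to$ is derived from $\neg$ and $\land$, by \cref{def:satisfaction} this is equivalent to: for every $\epsilon$, either $\epsilon \not\models \phi_1$ or $\epsilon \models \pbox\pfun\, \phi_2$. The right-hand side $\models \pbox{\pfun \downarrow \phi_1} \phi_2$ holds iff for every $\epsilon$, $\epsilon \models \pbox{\pfun \downarrow \phi_1} \phi_2$, which by the last clause of \cref{def:satisfaction} means $(\pfun\!\downarrow\!\phi_1)(\epsilon) \leq \Expectation[\epsilon]()\embed{\phi_2}$, with the expectation taken in $\MDP_s$ from the initial state $\langle\epsilon,s\rangle$; expanding the truncation this reads $\pfun(\epsilon)\cdot\embed{\phi_1}(\epsilon) \leq \Expectation[\epsilon]()\embed{\phi_2}$.

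Next I would compare the two conditions pointwise in $\epsilon$. If $\epsilon \models \phi_1$, then $\embed{\phi_1}(\epsilon) = 1$, so the right-hand condition becomes $\pfun(\epsilon) \leq \Expectation[\epsilon]()\embed{\phi_2}$, i.e.\ exactly $\epsilon \models \pbox\pfun\, \phi_2$; and on the left the disjunct $\epsilon \not\models \phi_1$ is false, so the left-hand condition also reduces to $\epsilon \models \pbox\pfun\, \phi_2$. If $\epsilon \not\models \phi_1$, then $\embed{\phi_1}(\epsilon) = 0$, so the right-hand condition becomes $0 \leq \Expectation[\epsilon]()\embed{\phi_2}$, which always holds since the reward $\embed{\phi_2}$ is non-negative and hence so is every expected reward and its infimum (as noted in \cref{sec:preliminaries}); and on the left the disjunct $\epsilon \not\models \phi_1$ is true, so the left-hand condition holds as well. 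Thus the two conditions are equivalent at every $\epsilon$, and therefore the universally quantified statements on the two sides are equivalent, which is the claim.

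Since the argument is essentially a definitional unfolding, I do not expect a deep obstacle; the two points that need a little care — and which I would make explicit up front — are: (i) that $\pfun\!\downarrow\!\phi_1$ is again an admissible expectation lower bound for the p-box modality, which relies on $\phi_1$ being a well-formed \PDL formula whose truth is fixed by the valuation $\epsilon$ (so that its characteristic function $\embed{\phi_1}$ makes sense on the initial states of $\MDP_s$) together with the closure remark following the truncation definition in \cref{sec:preliminaries}; and (ii) that the boundary case $\epsilon \not\models \phi_1$ is settled purely by non-negativity of expectations, independently of any property of the program $s$.
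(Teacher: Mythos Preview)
Your proposal is correct and follows essentially the same approach as the paper: a pointwise case split on whether $\epsilon \models \phi_1$, with the second case discharged by non-negativity of the expectation (the paper phrases this as the universal lower bound of \cref{prop:weakening}.\ref{item:universal-lb}). Your unfolding of the truncation and the p-box clause is slightly more explicit than the paper's write-up, but the argument is the same.
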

Note that we use validity naturally when working with abstract states, as the state is replaced by the precondition in the formula.

Finally, negation in \PDL is difficult to push over boxes.  This is due to non-determinism and the lower bound semantics of expectations it enforces.  A p-box property expresses a lower bound on probability of a post-condition holding after a program.  Naturally, a negation of a p-box property will express an \emph{upper-bound} on a property, but \PDL has no upper-bound modality first-class. We return to this problem in \cref{sec:purely-probabilistic}, where we discuss reasoning about upper-bounds in non-deterministic and in purely probabilistic programs.
\looseness = -1
\section{Expectations for Program Constructs}%
\label{sec:program-props}

This section investigates how expectations are transformed by \pGCL program constructs, as opposed to logical constructs discussed above. We begin by looking at the composite statements, which build the structure of the underlying MDP. The probabilistic choice introduces a small expectation update, consistent with an expectation of a Bernoulli variable (item 1). The demonic choice (item 2), requires that both sides provide the same guarantee, which is consistent with worst-case reasoning.
\begin{theorem}[Expectation and Choices]
  \label{thm:choices}
  Let \(s_i\) be programs, \(\phi\) a PDL formula, \(\func p_i\) lower bound functions for expectations into $[0,1]$, and \(\epsilon\) a valuation of variables. Then:

  \begin{enumerate}

    \item If \( \epsilon \models \pbox[s_1]{\pfun_1}  \phi \) and \( \epsilon \models \pbox[s_2]{\pfun_2}  \phi \) then \( \epsilon \models \pbox[s_1 \pchoice{e} s_2]{\pfun} \, \phi \)
    \\
    with \(\pfun = \epsilon(e)\pfun_1 \! + \! ( 1 \! - \! \epsilon(e)) \pfun_2 \) \label{item:extension}

    \smallskip

    \item \( \epsilon \models \pbox[s_1]{\pfun} \, \phi \) and \(\epsilon \models \pbox[s_2]{\pfun} \, \phi \) if and only if \( \epsilon \models \pbox[ s_1 \ndchoice s_2 ]\pfun \, \phi \) \label{item:ndchoice}

  \end{enumerate}

\end{theorem}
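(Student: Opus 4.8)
The plan is, in both items, to unfold the p-box clause of \cref{def:satisfaction} so that $\epsilon \models \pbox[s]{\pfun}\,\phi$ becomes the inequality $\pfun(\epsilon) \le \Expectation[\langle\epsilon,s\rangle]{\embed{\phi}} = \inf_\pi \ExpectedV[\langle\epsilon,s\rangle,\pi]{\embed{\phi}}$, with reward $\embed{\phi}$ and expectation computed in $\MDP_s$, and then to relate the expectation in the composite MDP to those in $\MDP_{s_1}$ and $\MDP_{s_2}$ by inspecting the operational rules of \cref{fig:semantics} at the composite's initial state.

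For item~\ref{item:extension} I would first note that from $\langle\epsilon, s_1\pchoice{e}s_2\rangle$ the only applicable rules are \textsc{ProbChoice1} and \textsc{ProbChoice2}, which (since we only consider non-stuck programs, so $\epsilon(e)\in[0,1]$) give a single successor distribution sending $\langle\epsilon,s_1\rangle$ to probability $\epsilon(e)$ and $\langle\epsilon,s_2\rangle$ to $1-\epsilon(e)$; there is no demonic branching here, so all policies agree on this first step, and this initial state is never revisited (no rule produces a bare top-level $\pchoice{e}$). Hence for every policy $\pi$ the set $\textrm{paths}_\pi(\langle\epsilon,s_1\pchoice{e}s_2\rangle)$ splits into paths continuing from $\langle\epsilon,s_1\rangle$ and paths continuing from $\langle\epsilon,s_2\rangle$, with path probabilities scaled by $\epsilon(e)$ and $1-\epsilon(e)$ respectively, so $\ExpectedV[\langle\epsilon,s_1\pchoice{e}s_2\rangle,\pi]{\embed{\phi}} = \epsilon(e)\,\ExpectedV[\langle\epsilon,s_1\rangle,\pi]{\embed{\phi}} + (1-\epsilon(e))\,\ExpectedV[\langle\epsilon,s_2\rangle,\pi]{\embed{\phi}}$. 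Taking the infimum over $\pi$ and using that the infimum of a sum is at least the sum of the infima (the coefficients being non-negative), the left side is bounded below by $\epsilon(e)\,\Expectation[\langle\epsilon,s_1\rangle]{\embed{\phi}} + (1-\epsilon(e))\,\Expectation[\langle\epsilon,s_2\rangle]{\embed{\phi}}$, which by the two hypotheses is at least $\epsilon(e)\pfun_1(\epsilon) + (1-\epsilon(e))\pfun_2(\epsilon) = \pfun(\epsilon)$; that is exactly $\epsilon \models \pbox[s_1\pchoice{e}s_2]{\pfun}\,\phi$.

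For item~\ref{item:ndchoice} the rule \textsc{DemChoice} says a policy $\pi$ commits to a branch $i\in\{1,2\}$ at $\langle\epsilon,s_1\ndchoice s_2\rangle$ and moves there with probability $1$; again no rule re-creates a bare top-level $\ndchoice$, so this initial state is never revisited, hence a positional policy on $\MDP_{s_i}$ together with a branch choice assembles into one on $\MDP_{s_1\ndchoice s_2}$ and conversely. For the ``if'' direction, assuming both $\epsilon \models \pbox[s_1]{\pfun}\,\phi$ and $\epsilon \models \pbox[s_2]{\pfun}\,\phi$, any policy $\pi$ on the composite picks some branch $i$, so $\ExpectedV[\langle\epsilon,s_1\ndchoice s_2\rangle,\pi]{\embed{\phi}} = \ExpectedV[\langle\epsilon,s_i\rangle,\pi]{\embed{\phi}} \ge \Expectation[\langle\epsilon,s_i\rangle]{\embed{\phi}} \ge \pfun(\epsilon)$, and taking the infimum over $\pi$ yields $\epsilon \models \pbox[s_1\ndchoice s_2]{\pfun}\,\phi$. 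For the ``only if'' direction, from $\epsilon \models \pbox[s_1\ndchoice s_2]{\pfun}\,\phi$ and any policy $\pi_1$ on $\MDP_{s_1}$, extend $\pi_1$ to the composite by choosing branch $1$ initially; then $\pfun(\epsilon) \le \Expectation[\langle\epsilon,s_1\ndchoice s_2\rangle]{\embed{\phi}} \le \ExpectedV[\langle\epsilon,s_1\ndchoice s_2\rangle,\pi]{\embed{\phi}} = \ExpectedV[\langle\epsilon,s_1\rangle,\pi_1]{\embed{\phi}}$, and since $\pi_1$ was arbitrary, $\pfun(\epsilon) \le \Expectation[\langle\epsilon,s_1\rangle]{\embed{\phi}}$, i.e.\ $\epsilon \models \pbox[s_1]{\pfun}\,\phi$; the argument for $s_2$ is symmetric.

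I expect the only real obstacle to be the careful bookkeeping behind the two path-decomposition identities — showing that every path from the composite's initial state corresponds to one fixed first step followed by a path in the appropriate $\MDP_{s_i}$, with probabilities and final states preserved, and that positional policies restrict and extend between the composite MDP and $\MDP_{s_i}$ without interference (which uses that the composite's initial state has no incoming transitions). The genuinely probabilistic content of item~\ref{item:extension} is then the single inequality $\inf_\pi(a X_\pi + b Y_\pi) \ge a\,\inf_\pi X_\pi + b\,\inf_\pi Y_\pi$ for $a,b\ge 0$; it is essential both that this holds only as an inequality — the reachable state spaces of $s_1$ and $s_2$ may overlap, so one policy constrains both sub-MDPs — and that it points in the direction required for a lower-bound (p-box) conclusion.
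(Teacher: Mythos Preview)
Your proposal is correct and follows essentially the same route as the paper's proof: both reduce the p-box to the expectation inequality of \cref{def:satisfaction}, relate the composite MDP's expectation to those of $\MDP_{s_1}$ and $\MDP_{s_2}$ via the operational rules \textsc{ProbChoice1/2} and \textsc{DemChoice}, and use the single inequality $\inf_\pi(aX_\pi + bY_\pi) \ge a\inf_\pi X_\pi + b\inf_\pi Y_\pi$ (packaged in the paper as \cref{lemma:expectation-properties}.\ref{item:expectation-monotonic-plus} and \ref{item:expectation-mult-const}) for item~\ref{item:extension}, and the identity $\inf_\pi = \min\{\inf_{\pi_1},\inf_{\pi_2}\}$ for item~\ref{item:ndchoice}. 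The only organizational difference is that you first establish the expected-value decomposition under a fixed policy and then take the infimum, whereas the paper works directly with expectations and invokes its pre-proved lemma; your explicit remarks about non-revisitation of the initial state and policy restriction/extension make the bookkeeping the paper leaves implicit a bit more visible.
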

Note that in the second case, demonic, we can always use weakening (\cref{prop:weakening}.\ref{item:quant-weakening}) to equalize the left-hand-side expectation lower-bounds using a point-wise minimum, if the premises are established earlier for different lower bound functions.

\begin{example}
This example shows that a non-deterministic assignment is less informative than a probabilistic assignment.
It shows that  \PDL can be used to make statements that compare programs directly in the formal system---one of its distinctive features in comparison with prior works (cf.\,\cref{sec:related}).
We check satisfaction of the following \PDL\ formula for any
expectation lower bound $\pfun$:
\looseness = -1
$$
\models \forall \delta \cdot \forall p \cdot
0\leq p \leq 1 \implies ([\text{\lstinline|x:=0|} \ndchoice \text{\lstinline|x:=1|}]_\pfun (x
\geq \delta) \implies [\text{\lstinline|x:=0|} \pchoice{p}
\text{\lstinline|x:=1|}]_\pfun (x \geq \delta))\ .
$$
For simplicity, we use the logical variable $p$ directly in the rightmost program (this can easily be encoded as an additional assumption equating a fresh logical variable to a program variable).
For the proof, we first simplify the formula using equivalence rewrites:
\begin{alignat*}{3}
  & \models \forall \delta \cdot \forall p \cdot
  0\leq p \leq 1 \implies
  \rlap{\ensuremath{([\text{\lstinline|x:=0|} \ndchoice \text{\lstinline|x:=1|}]_\pfun (x
\geq \delta) \implies [\text{\lstinline|x:=0|} \pchoice{p}
\text{\lstinline|x:=1|}]_\pfun (x \geq \delta))}}
  \\
  & \text{iff for \(\epsilon\), \(\delta\), \(p\) we have }\\
  & \epsilon \models 0 \!\leq\! p \!\leq\! 1 \!\implies\!
  ([\text{\lstinline|x:=0|} \ndchoice \text{\lstinline|x:=1|}]_\pfun (x
\geq \delta) \implies [\text{\lstinline|x:=0|} \pchoice{p}
  \text{\lstinline|x:=1|}]_\pfun (x \geq \delta))
  & \Since{\cref{eq:validity}, \cref{def:satisfaction} \(\forall\)}
  \\
  & \text{iff for \(\epsilon\), \(\delta\), \(p\) we have }\\
  & \epsilon \models \neg 0 \!\leq\! p \!\leq\! 1  \!\lor\!
  \neg[\text{\lstinline|x:=0|} \ndchoice \text{\lstinline|x:=1|}]_\pfun (x
\geq \delta) \lor [\text{\lstinline|x:=0|} \pchoice{p}
  \text{\lstinline|x:=1|}]_\pfun (x \geq \delta)
  & \Since{syntactic sugar}
  \\
  & \text{iff for \(\epsilon\), \(\delta\), \(p\) we have }\\
  & \neg 0 \!\leq\! p \!\leq\! 1  \,\lor\,
  \neg \pfun(\epsilon) \leq \Expectation[\epsilon](){x
  \geq \delta} \,\lor\, \pfun(\epsilon) \leq \Expectation[\epsilon](){x \geq \delta}
  & \Since{\cref{def:satisfaction}, the box}
\end{alignat*}
In the last line above the left expectation is taken in MDP \(\MDP_{\text{\lstinline|x:=0|} \ndchoice \text{\lstinline|x:=1|}}\) and the right one is taken in \(\MDP_{\text{\lstinline|x:=0|} \pchoice{p} \text{\lstinline|x:=1|}}\).

Now the property is a disjunction of three cases. If the first or second disjunct hold the formula holds vacuously (the assumptions in the statement are violated). We focus on the last case, when the first two disjuncts are violated (so the assumptions hold). We need to show that the last disjunct holds. We split the reasoning in two cases:
\begin{enumerate}
  \item \(\delta \leq 0\): Consider the right expectation
    \(\Expectation[\epsilon](){x \geq \delta}\). In the right program
    this expectation is equal to \(1\) because the formula always
    holds (both possible values of \(x\) are greater or equal to \(\delta\)).  Consequently, any expectation lower bound \(\pfun\) is correct for this formula: \(\pfun(\epsilon) \leq \Expectation[\epsilon](){x \geq \delta}\) in the right program.

  \item \(\delta > 0\): Consider the left expectation
    \(\Expectation[\epsilon](){x \geq \delta}\). By
    \cref{eq:expectation} this expectation is equal to zero (the
    policy that chooses the left branch in the program violates the
    property as $x = 0 < \delta$). Since
    \(\pfun(\epsilon) \leq \Expectation[\epsilon](){x \geq \delta} =
    0\), it must be that \(\pfun(\epsilon) = 0\) in the left
    program. By the universal lower bound property
    (\cref{prop:weakening}.1), all properties hold after any program
    with the expectation lower bound \(\pfun\), including the
    post-condition of the right program.  \looseness = -1 \qed
\end{enumerate}
\end{example}

\noindent
For any program logic, it is essential that we can reason about composition of consecutive statements; allowing the post-condition of one to be used as a pre-condition for the other.  The following theorem demonstrates that sequencing in \pGCL corresponds to composition of expectations in the MDP domain. It uses implication elimination (\cref{thm:implication-elimination}) to compute a post-condition for a sequence of programs. Crucially, the new lower bound is computed using an expectation operation in the MDP of the first program, using the lower-bound of the second program as a reward function.  Here, the expectation operation acts as a way to explore the program graph and accumulate values in final states.
\looseness = -1

\begin{theorem}[Expectation and Sequencing]
  \label{thm:sequencing}
  Let \(s_i\) be \pGCL programs, \(\phi_i\) be \PDL formulae, \(\epsilon\) be a valuation, and \(\pfun\) an expectation lower bound function.
  \begin{center}
    If \(\models \phi_1 \implies (\pbox[s_2]{\pfun} \, \phi_2) \)  then  \(\epsilon \models \pbox[s_1;s_2]{\Expectation[\langle\epsilon,s_1\rangle](){\pfun\downarrow\phi_1}} \, \phi_2 \) \enspace,
  \end{center}
  where the expectation \(\Expectation[\langle\epsilon,s_1\rangle](){\pfun\!\downarrow\!\phi_1}\) is taken in \(\MDP_{s_1}\) with \(\pfun\!\downarrow\!\phi_1\) as the reward function.

    \label{item:decompose}

\end{theorem}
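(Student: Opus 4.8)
The plan is to unfold both sides of the stated implication against \cref{def:satisfaction}, normalise the hypothesis with implication elimination, and reduce the whole claim to a single inequality between two MDP expectations, bridged by a ``sequencing'' (bind) law for the expectation operator. Concretely, I would first apply \cref{thm:implication-elimination} to rewrite the hypothesis $\models \phi_1 \implies (\pbox[s_2]{\pfun}\,\phi_2)$ as $\models \pbox[s_2]{\pfun\downarrow\phi_1}\,\phi_2$; unfolding this with \cref{eq:validity} and the p-box clause of \cref{def:satisfaction} yields $(\pfun\!\downarrow\!\phi_1)(\epsilon') \le \Expectation[\langle\epsilon',s_2\rangle](){\embed{\phi_2}}$ for \emph{every} valuation $\epsilon'$, the expectation taken in $\MDP_{s_2}$. (The same fact is obtained directly: when $\epsilon'\models\phi_1$ the inequality is the unfolded hypothesis, and when $\epsilon'\not\models\phi_1$ its left side is $0$.) Unfolding the goal $\epsilon\models\pbox[s_1;s_2]{\Expectation[\langle\epsilon,s_1\rangle](){\pfun\downarrow\phi_1}}\,\phi_2$ in the same way, it then suffices to prove
\[
  \Expectation[\langle\epsilon,s_1\rangle](){\pfun\!\downarrow\!\phi_1} \;\le\; \Expectation[\langle\epsilon,s_1;s_2\rangle](){\embed{\phi_2}},
\]
the left expectation taken in $\MDP_{s_1}$ with reward $\pfun\!\downarrow\!\phi_1$ and the right one in $\MDP_{s_1;s_2}$ with reward $\embed{\phi_2}$.

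The key ingredient is a \emph{sequencing lemma} for the expectation operator: for every valuation $\epsilon$ and reward function $\func r$,
\[
  \Expectation[\langle\epsilon,s_1;s_2\rangle](){\func r} \;=\; \Expectation[\langle\epsilon,s_1\rangle](){\func g}, \qquad \text{where}\quad \func g(\epsilon') = \Expectation[\langle\epsilon',s_2\rangle](){\func r}.
\]
This holds because, by rules \textsc{Composition1}/\textsc{Composition2}, every path of $\MDP_{s_1;s_2}$ from $\langle\epsilon,s_1;s_2\rangle$ to a final state splits uniquely at the junction configuration $\langle\epsilon',\code{skip};s_2\rangle$ into a path of $\MDP_{s_1}$ from $\langle\epsilon,s_1\rangle$ to $\langle\epsilon',\code{skip}\rangle$ followed by a path of $\MDP_{s_2}$ from $\langle\epsilon',s_2\rangle$; the two path-probabilities multiply, the final state of the composite path is that of the second segment, and any policy for $\MDP_{s_1;s_2}$ restricts to a policy for $\MDP_{s_1}$ on the first phase and to one for $\MDP_{s_2}$ on the second (and conversely), so the infimum over policies factors through the split. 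I would make this precise by induction on the structure of $s_1$: the \textsc{Assign}, probabilistic-choice, demonic-choice, \textsc{If} and \textsc{While} cases follow immediately from the corresponding one-step rule, using that the optimal value distributes over a deterministic, probabilistic, or demonic first step in the usual way; the case $s_1 = s_1';s_1''$ uses associativity of ``;'' up to equi-expectation (a short auxiliary induction). The standing assumption that all programs terminate almost surely is used to guarantee the junction is reached with probability $1$, so no mass is lost; and since only the inequality ``$\ge$'' is needed below, it actually suffices to bound each $\ExpectedV[\langle\epsilon,s_1;s_2\rangle,\pi](){\func r}$ from below by $\Expectation[\langle\epsilon,s_1\rangle](){\func g}$.

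Finally I would instantiate the lemma with $\func r = \embed{\phi_2}$. Then $\func g(\epsilon') = \Expectation[\langle\epsilon',s_2\rangle](){\embed{\phi_2}} \ge (\pfun\!\downarrow\!\phi_1)(\epsilon')$ for all $\epsilon'$ by the normalised hypothesis, i.e.\ $\func g \ge \pfun\!\downarrow\!\phi_1$ pointwise; and since $\Expectation[\langle\epsilon,s_1\rangle](){\cdot}$ is monotone in its reward argument (it is an infimum of sums of rewards weighted by non-negative path-probabilities), we get
\[
  \Expectation[\langle\epsilon,s_1;s_2\rangle](){\embed{\phi_2}} \;=\; \Expectation[\langle\epsilon,s_1\rangle](){\func g} \;\ge\; \Expectation[\langle\epsilon,s_1\rangle](){\pfun\!\downarrow\!\phi_1},
\]
which is exactly the inequality isolated above; applying the p-box clause of \cref{def:satisfaction} with the lower-bound function $\epsilon\mapsto\Expectation[\langle\epsilon,s_1\rangle](){\pfun\downarrow\phi_1}$ gives the conclusion.

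The hard part is the sequencing lemma — specifically, making the path/policy decomposition at the junction fully rigorous: one must handle the $\code{skip};s$ configurations, the associativity of sequential composition (so that $(s_1';s_1'');s_2$ behaves like $s_1';(s_1'';s_2)$), and the interplay of the split with the $\inf$ over policies, all while leaning on almost-sure termination to rule out pathological non-reachable junctions. Everything outside that lemma is routine unfolding of \cref{def:satisfaction}, \cref{thm:implication-elimination}, and monotonicity of the expectation in its reward argument.
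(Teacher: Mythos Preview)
Your approach is essentially the paper's: apply \cref{thm:implication-elimination} to the hypothesis, unfold both sides with \cref{def:satisfaction}, and reduce to a single inequality bridged by monotonicity of the expectation in its reward plus a nested-expectation bound. The paper's proof is exactly this chain, citing \cref{lemma:expectation-properties}.\ref{item:nested-expectations} for the last step.

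One point of divergence worth noting: you state your sequencing lemma as an \emph{equality} and propose to prove it by structural induction on $s_1$. The paper neither claims nor needs equality; it records only the inequality $\Expectation[\sigma](){\lambda\sigma'\cdot\Expectation[\sigma'](){\func r}}\le\Expectation[\sigma](){\func r}$ (\cref{lemma:expectation-properties}.\ref{item:nested-expectations}), and proves it directly by path concatenation: expand both expected values as sums over paths (\cref{lemma:expectation-properties}.\ref{item:nested-expected-vs}), observe that paths of $\MDP_{s_1;s_2}$ are concatenations of paths of $\MDP_{s_1}$ and $\MDP_{s_2}$ with multiplicative probabilities, and then bound the double infimum over $(\pi,\pi')$ by the single infimum over $\pi$. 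No induction on the syntax of $s_1$ is required. Your equality would additionally need that the two infima can be decoupled, i.e.\ that a single positional policy for $\MDP_{s_2}$ is simultaneously optimal from every entry valuation $\epsilon'$; that is standard MDP theory but not established in the paper, and---as you yourself observe---unnecessary here since only the ``$\ge$'' direction is used.
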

For a piece of intuition, note that the above theorem captures the basic step of a backwards reachability algorithm for MDPs, but expressed in \PDL; it accumulates expectations backwards over \(s_1\) from what is already known for \(s_2\).

We now move to investigating how simple statements translate expectations:
\begin{theorem}[Unfolding Simple Statements]
  \label{thm:simple-statements}
  Let \(s\) be a \pGCL program, \(\phi\) a \PDL formula, \(\pfun\) a function into \([0,1]\), a lower bound on expectations, and \(\epsilon\) a valuation. Then
  \looseness = -1
  \begin{enumerate}

    \item \(\epsilon \models \pbox[\code{skip}]{\func{1}} \, \phi\)  iff \(\epsilon \models \phi\)\label{item:skip}

    \smallskip

    \item \(\epsilon \models \pbox\pfun \, \phi\) ~ iff ~ \(\epsilon \models \pbox[ \code{skip}; s ]\pfun \, \phi \)
    \label{item:skip-composed}

    \smallskip

    \item  \( \epsilon \models \pbox[ \code{x:=} e; s ]\pfun \, \phi \) ~ iff ~ \( \epsilon [x \mapsto \epsilon(e)] \models \pbox\pfun \, \phi \)
    \label{item:assignment}

  \end{enumerate}
\end{theorem}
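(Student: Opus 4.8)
The plan is to prove all three equivalences by expanding the p-box clause of \cref{def:satisfaction} and then reading off the relevant MDPs --- $\MDP_{\code{skip}}$, $\MDP_{\code{skip};s}$, and $\MDP_{\code{x:=}\,e;s}$ --- directly from the operational rules of \cref{fig:semantics}. Throughout I would use that whether an MDP state $\langle\epsilon',s'\rangle$ satisfies a well-formed state formula $\phi$ depends only on $\epsilon'$, so the reward $\embed\phi$ can be read off $\models_\LIT$ at the valuation. For the first claim, the pGCL semantics declares $\langle\epsilon,\code{skip}\rangle$ final, so it has no outgoing transition and, under every policy, the only path from it is the one-state path with probability $1$; hence $\Expectation[\langle\epsilon,\code{skip}\rangle](){\embed\phi} = \embed\phi(\langle\epsilon,\code{skip}\rangle)$, which is $1$ exactly when $\epsilon\models\phi$. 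Expanding the last clause of \cref{def:satisfaction} then gives $\epsilon\models\pbox[\code{skip}]{\func1}\phi$ iff $1=\func1(\epsilon)\le\embed\phi(\langle\epsilon,\code{skip}\rangle)$ iff $\embed\phi(\langle\epsilon,\code{skip}\rangle)=1$ iff $\epsilon\models\phi$.

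For the second claim the key step is to use \textsc{(Composition1)} to show that $\langle\epsilon,\code{skip};s\rangle$ has exactly the successor distribution of $\langle\epsilon,s\rangle$, and that every state reachable from $\langle\epsilon,\code{skip};s\rangle$ after that first step is a state of $\MDP_s$ reachable from $\langle\epsilon,s\rangle$ with the identical outgoing behaviour. This gives, for each policy, a bijection between $\textrm{paths}_\pi(\langle\epsilon,\code{skip};s\rangle)$ and $\textrm{paths}_\pi(\langle\epsilon,s\rangle)$ that relabels only the initial element and hence preserves both $\Pr(\cdot)$ and $\final{\cdot}$ (the degenerate case $s=\code{skip}$, where $\langle\epsilon,\code{skip};\code{skip}\rangle$ is itself final, being immediate). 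Transferring policies along this correspondence --- they can differ only on the lone extra state, which bears no reward --- yields $\Expectation[\langle\epsilon,\code{skip};s\rangle](){\embed\phi}=\Expectation[\langle\epsilon,s\rangle](){\embed\phi}$, so both sides of the claimed equivalence expand, via \cref{def:satisfaction}, to the single condition $\pfun(\epsilon)\le\Expectation[\langle\epsilon,s\rangle](){\embed\phi}$.

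For the third claim I would note that \textsc{(Assign)} and \textsc{(Composition2)} together force the unique transition out of $\langle\epsilon,\code{x:=}\,e;s\rangle$ to be the probability-$1$ step to $\langle\epsilon[x\mapsto\epsilon(e)],\code{skip};s\rangle$, regardless of the policy. Every path from $\langle\epsilon,\code{x:=}\,e;s\rangle$ is then that step followed by a path from $\langle\epsilon[x\mapsto\epsilon(e)],\code{skip};s\rangle$ with unchanged probability mass and final state, so $\Expectation[\langle\epsilon,\code{x:=}\,e;s\rangle](){\embed\phi}=\Expectation[\langle\epsilon[x\mapsto\epsilon(e)],\code{skip};s\rangle](){\embed\phi}$, which by the second claim equals $\Expectation[\langle\epsilon[x\mapsto\epsilon(e)],s\rangle](){\embed\phi}$. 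Expanding both sides of the stated equivalence through the p-box clause --- reading the bound and post-condition at the updated valuation $\epsilon[x\mapsto\epsilon(e)]$ on the right --- then identifies them.

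I expect the main obstacle to be making the ``isomorphic sub-MDP up to relabeling the initial state'' argument behind the second claim fully rigorous: one must show that prepending ``$\code{skip};$'' contributes exactly one extra state (the initial one) before the behaviour collapses onto $\MDP_s$, set up the path bijection and check it commutes with $\Pr$ and $\final{\cdot}$, and argue that the infima over positional policies in \cref{eq:expectation} coincide because any policy on one side restricts or extends to a policy on the other that is indistinguishable on all states reachable from the respective initial states. Once this bridging lemma is in place --- and one has absorbed the small bookkeeping of the valuation update in the third claim --- the rest is routine unfolding of \cref{def:satisfaction} and \cref{fig:semantics}.
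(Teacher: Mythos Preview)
Your proposal is correct and follows essentially the same route as the paper: unfold the p-box clause of \cref{def:satisfaction}, read off the one-path MDP for \code{skip} in item~\ref{item:skip}, argue directly from \textsc{(Composition1)} that $\MDP_{\code{skip};s}$ and $\MDP_s$ yield the same expectation in item~\ref{item:skip-composed}, and reduce item~\ref{item:assignment} via \textsc{(Assign)}/\textsc{(Composition2)} to the previous item. Your treatment is in fact slightly more careful than the paper's---you make the path bijection and policy transfer in item~\ref{item:skip-composed} explicit and flag the degenerate case $s=\code{skip}$---but the underlying argument is the same.
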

The case of if-conditions below is rather classic (\cref{thm:loops-conditionals}.\ref{item:while-unfold}).  For any given state, we can evaluate the head condition and inherit the expectation from the selected branch.  For this to work we assume that the atomic formulae (\LIT) satisfaction semantics in \PDL is consistent with the expression evaluation semantics in \pGCL.  The case of while loops is much more interesting---indeed a plethora of works have emerged recently on proposing sound reasoning rules for while loop invariants, post-conditions and termination (see \cref{sec:related}).  In this paper, we show the simplest possible reasoning rule for loops that performs a single unrolling, exactly along the operational semantics. Of course, we are confident that many other rules for reasoning about while loops (involving invariants, prefixes, or converging chains of probabilities) can also be proven sound in \PDL---left as future work.
\looseness = -1
\newcommand{\whileunfolded}{\code{if } \;e\; \code{\{} s  \code{; while } \;e\;\; \code{\{} s \, \code{\}\} else \{skip\}}}
\newcommand{\whilefolded}{\, \code{while } \;e\; \code{\{} s \, \code{\}}}
\newcommand{\IfThenElse}{\code{if} \; e \;\;\code{\{} s_1 \, \code{\} else \{ } s_2 \, \code{ \}}}
\begin{theorem}[Unfolding Loops and Conditionals]
  \label{thm:loops-conditionals}
  Let \(e\) be a program expression (also an atomic \PDL formula over program variables in \(X\)), \(\phi\) be a \PDL atomic formula, \(s_i\) be \pGCL programs, \(\pfun\) an expectation lower bound function, and \(\epsilon\) a valuation. Then:
  \begin{enumerate}

    \item If \( \epsilon \models e \land  \pbox[ s_1 ]\pfun \, \phi \)  then \( \epsilon \models \pbox[ \IfThenElse ]\pfun \, \phi \)\label{item:if-true}

      \smallskip

    \item If \( \epsilon \models \neg e \land  \pbox[ s_2 ]\pfun \, \phi \)  then \(\epsilon \models \pbox[\code{if } \;e\;\;  \code{\{} s_1 \, \code{\}  else \{} s_2 \, \code{\}}]\pfun \, \phi \)\label{item:if-false}

      \smallskip

    \item \(\epsilon \models \pbox[ \whileunfolded ]\pfun \, \phi\) iff \(\epsilon \models \pbox[ \whilefolded ]\pfun \, \phi\)
    \label{item:while-unfold}

  \end{enumerate}

\end{theorem}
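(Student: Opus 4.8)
The plan is to prove all three parts by unfolding \cref{def:satisfaction} for the p-box case and then appealing to the operational semantics rules in \cref{fig:semantics}. For part~1, the hypothesis $\epsilon \models e \land \pbox[s_1]{\pfun}\,\phi$ gives us two things: first, $\epsilon \models_\LIT e$, so the expression $e$ evaluates to $\textit{true}$ in $\epsilon$; second, $\pfunepsilon \leq \Expectation[\langle\epsilon,s_1\rangle](){\embed\phi}$ taken in $\MDP_{s_1}$. By rule \textsc{If1}, the unique transition from $\langle\epsilon, \IfThenElse\rangle$ is $\langle\epsilon, \IfThenElse\rangle \xrightarrow{1}_\pi \langle\epsilon, s_1\rangle$, and this holds for every policy $\pi$ (the head statement is deterministic). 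Hence the set of paths from $\langle\epsilon, \IfThenElse\rangle$ under any $\pi$ is in probability-preserving bijection with the paths from $\langle\epsilon,s_1\rangle$ under the same $\pi$, just prepended with the initial state. I would note that the prepended state is not final (it has an outgoing transition), so it contributes nothing to the reward sum, and the probability multiplies by $1$. Therefore the two MDPs have the same expected reward for $\embed\phi$ from their respective initial states: $\Expectation[\langle\epsilon, \IfThenElse\rangle](){\embed\phi} = \Expectation[\langle\epsilon,s_1\rangle](){\embed\phi}$, and the conclusion follows. Part~2 is entirely symmetric, using rule \textsc{If2} and the hypothesis $\epsilon \models_\LIT \neg e$.

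For part~3, the structure is the same but now I must handle both directions of the ``iff'' and both truth values of $e$. The key observation is that $\code{while } e\ \{s\}$ and its one-step unrolling $\IfThenElse$-with-body-$s;\whilefolded$ have \emph{identical} transition behavior from $\langle\epsilon, -\rangle$: if $\epsilon(e)=\textit{true}$, rule \textsc{While1} sends $\langle\epsilon,\whilefolded\rangle \xrightarrow{1}_\pi \langle\epsilon, s;\whilefolded\rangle$, and rule \textsc{If1} sends $\langle\epsilon, \whileunfolded\rangle \xrightarrow{1}_\pi \langle\epsilon, s;\whilefolded\rangle$ --- the same successor; if $\epsilon(e)=\textit{false}$, rules \textsc{While2} and \textsc{If2} both send $\langle\epsilon,-\rangle \xrightarrow{1}_\pi \langle\epsilon,\code{skip}\rangle$ (note the \code{else \{skip\}} branch is chosen in the unfolded program). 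Since the two programs reach the same immediate successor in one deterministic step in either case, and thereafter evolve through the same MDP fragment, the reachable sub-MDPs are isomorphic in a way that preserves path probabilities and final-state rewards. Consequently $\Expectation[\langle\epsilon, \whileunfolded\rangle](){\embed\phi} = \Expectation[\langle\epsilon, \whilefolded\rangle](){\embed\phi}$, and the biconditional on $\pbox{\pfun}\,\phi$ follows immediately from \cref{def:satisfaction}.

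The main technical obstacle --- and the step I would write most carefully --- is justifying the claim that prepending a single deterministic, non-final state to every path, or replacing a statement by an operationally equivalent one, does not change the expectation $\Expectation{\embed\phi} = \inf_\pi \sum_{\overline\sigma} \Pr(\overline\sigma)\, \embed\phi(\final{\overline\sigma})$. This requires: (i) exhibiting the bijection between $\paths{\pi}{\langle\epsilon, \IfThenElse\rangle}$ and $\paths{\pi}{\langle\epsilon, s_1\rangle}$ (resp.\ between the while-paths and the unfolded-paths); (ii) checking that it preserves $\Pr(\cdot)$, which holds because the extra transition has probability $1$ so $\Pr$ is multiplied by $1$; (iii) checking it preserves $\final{\cdot}$, which is immediate since we only add a non-final state at the front; and (iv) observing that the correspondence is uniform in $\pi$, so the infima over policies coincide. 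A subtle point worth flagging is that the programs $s_1;\whilefolded$ and $\whilefolded$ mention the same variables, so valuations and the interpretation of $\embed\phi$ in final states are unambiguous; and that the \code{else \{skip\}} clause in the unfolded form is exactly what makes the $\epsilon(e)=\textit{false}$ case line up with \textsc{While2}. One could also phrase (i)--(iv) once as a small lemma about bisimulation of sub-MDPs rooted at states related by a single probability-$1$ edge, and then instantiate it three times; I would mention this as the cleaner route but carry out the direct argument since the paper does not develop an MDP-bisimulation framework.
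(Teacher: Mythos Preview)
Your proposal is correct and follows essentially the same approach as the paper: for parts~1 and~2 you unfold the p-box satisfaction, invoke the assumption $\epsilon\models e$ (resp.\ $\epsilon\models\neg e$) to select the relevant \textsc{If} rule, and argue that the extra probability-$1$ head step leaves the expectation unchanged; for part~3 you observe that \textsc{While1}/\textsc{If1} and \textsc{While2}/\textsc{If2} produce literally the same successor state, so the two MDPs yield equal expectations. The paper presents part~1 as an explicit chain of (in)equalities (inserting a $0$-weighted sum over $s_2$-paths before collapsing to $\MDP_i$) rather than your path-bijection phrasing, but the content is identical.
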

\section{Purely Probabilistic and Deterministic Programs}%
\label{sec:purely-probabilistic}

The main reason for the lower-bound expectation semantics in \PDL (inherited from McIver\&Morgan) is the presence of demonic choice in \pGCL.  With non-determinism in the language, calculating precise probabilities is not possible.  However, this does not mean that \PDL cannot be used to reason about upper-bounds.  The following theorem explains:\footnote{The theorem is named as a tribute to the song \emph{Both sides now} by Joni Mitchell.}
\looseness = -1
\begin{theorem}[Joni's Theorem]
  For a policy \(\pi\), property \(\phi\), program \(s\), and state \(\epsilon\): if \( \epsilon \models \pbox{\pfun_1} \phi \) and  \( \epsilon \models \pbox{\pfun_2} \neg\phi\) then \( \mathbb E_{ \pi, \epsilon } \embed\phi \in [ p_1, 1-p_2 ] \).

\end{theorem}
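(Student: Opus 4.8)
The plan is to unfold both p-box hypotheses through the last clause of \cref{def:satisfaction}, and then glue the two resulting one-sided estimates together with the elementary identity $\embed{\phi}+\embed{\neg\phi}=\func{1}$ (on final states) together with the standing assumption that $\MDP_s$ reaches a final state with probability $1$ under every policy.

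Write $p_i=\pfun_i(\epsilon)$ for the scalar bounds at $\epsilon$. By the p-box clause of \cref{def:satisfaction}, $\epsilon\models\pbox{\pfun_1}\phi$ says $p_1\le\Expectation[\langle\epsilon,s\rangle](){\embed{\phi}}=\inf_{\pi'}\ExpectedV[\langle\epsilon,s\rangle,\pi'](){\embed{\phi}}$, with the expectation taken in $\MDP_s$; and $\epsilon\models\pbox{\pfun_2}\neg\phi$ says $p_2\le\inf_{\pi'}\ExpectedV[\langle\epsilon,s\rangle,\pi'](){\embed{\neg\phi}}$ (note that $\neg\phi$ is again a well-formed \PDL\ formula and $\embed{\neg\phi}$ maps into $[0,1]$, so this second p-box is meaningful). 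Since an infimum over all policies is a lower bound for the value attained at the particular policy $\pi$ fixed in the statement, we obtain the two one-sided estimates $p_1\le\ExpectedV[\langle\epsilon,s\rangle,\pi](){\embed{\phi}}$ and $p_2\le\ExpectedV[\langle\epsilon,s\rangle,\pi](){\embed{\neg\phi}}$, where $\ExpectedV[\langle\epsilon,s\rangle,\pi](){\embed{\phi}}$ is the quantity written $\mathbb E_{\pi,\epsilon}\embed{\phi}$ in the statement.

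It remains to see that these two expected rewards add up to $1$. For every final state $\sigma$, the two-valued semantics of \cref{def:satisfaction} makes exactly one of $\sigma\models\phi$ and $\sigma\models\neg\phi$ hold, hence $\embed{\phi}(\sigma)+\embed{\neg\phi}(\sigma)=1$. Substituting this into the defining sum of \cref{eq:expectation} and using that the path probabilities out of $\langle\epsilon,s\rangle$ under $\pi$ sum to $1$ (the standing almost-sure termination assumption) yields
\[
  \ExpectedV[\langle\epsilon,s\rangle,\pi](){\embed{\phi}}
  +\ExpectedV[\langle\epsilon,s\rangle,\pi](){\embed{\neg\phi}}
  =\!\!\!\sum_{\overline\sigma\in\textrm{paths}_\pi(\langle\epsilon,s\rangle)}\!\!\!\Pr (\overline\sigma)=1 .
\]
So $\ExpectedV[\langle\epsilon,s\rangle,\pi](){\embed{\neg\phi}}=1-\ExpectedV[\langle\epsilon,s\rangle,\pi](){\embed{\phi}}$, and the bound $p_2\le\ExpectedV[\langle\epsilon,s\rangle,\pi](){\embed{\neg\phi}}$ rearranges to $\ExpectedV[\langle\epsilon,s\rangle,\pi](){\embed{\phi}}\le 1-p_2$; together with $p_1\le\ExpectedV[\langle\epsilon,s\rangle,\pi](){\embed{\phi}}$ this is exactly $\mathbb E_{\pi,\epsilon}\embed{\phi}\in[p_1,1-p_2]$. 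Nonemptiness of the interval, $p_1\le 1-p_2$, then follows by transitivity.

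There is no genuine obstacle here; the one point that needs care is the appeal to almost-sure termination, which is what makes the two expected rewards sum to exactly $1$ (not merely $\le 1$) and hence what upgrades the two independent lower bounds into a two-sided estimate. Without that assumption one could only keep the lower bound $p_1$.
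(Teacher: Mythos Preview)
Your proof is correct and follows exactly the route the paper indicates: the paper does not give a formal proof of this theorem, stating only that it ``follows directly from \cref{eq:expectation} and the negation case in \cref{def:satisfaction},'' and your argument is precisely the unpacking of that sentence (with the identity $\embed{\neg\phi}=1-\embed{\phi}$ appearing in the paper as \cref{lemma:embeddings}.\ref{item:negation}). Your explicit remark that almost-sure termination is what makes the two expected rewards sum to exactly~$1$ is a useful point the paper leaves implicit.
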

The theorem means that for a purely probabilistic program derived by fixing a policy for a \pGCL program $s$, the expected reward is bounded from below by the expectation of this reward in \(s\), and from above by the expectation of its negation in \(s\). The theorem follows directly from \cref{eq:expectation} and the negation case in \cref{def:satisfaction}.

For deterministic programs, some surprising properties, follow from interaction of probability and logics.  For instance, we can conclude a conjunction of \emph{expectations} from an expectation of a disjunction.

\begin{theorem}\label{thm:disjunction-to-conjunction}

  Let $s$ be a purely probabilistic pGCL program (a program that does not use the demonic choice), let \(\epsilon\) stand for a valuation, \( \func p \in \State \rightarrow [0, 1] \) be an expectation function, and \( \phi_i \in \PDL \) properties.   Then if \(\epsilon \models \pbox{\pfun} \, ( \phi_1 \lor \phi_2 ) \) then there exist \(\pfun_1\), \(\pfun_2\), \( \pfun_1 + \pfun_2 \geq \pfun \) everywhere, such that \( \epsilon \models \pbox{\pfun_1}\, \phi_1 \) and \( \epsilon \models \pbox{\pfun_2}\, \phi_2 \).

\end{theorem}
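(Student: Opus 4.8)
The plan is to work directly from the MDP semantics of a purely probabilistic (policy-free) program. Since $s$ has no demonic choice, $\MDP_s$ has a unique "policy" and hence a genuine probability distribution over final states; write $\mu$ for the distribution on final valuations induced by $s$ from $\langle\epsilon,s\rangle$. By the p-box case of \cref{def:satisfaction}, the hypothesis $\epsilon\models\pbox{\pfun}(\phi_1\lor\phi_2)$ says $\pfun(\epsilon)\le \Expectation[\langle\epsilon,s\rangle](){\embed{\phi_1\lor\phi_2}} = \mu(\{\epsilon' : \epsilon'\models\phi_1\lor\phi_2\})$. I would then choose $\pfun_1$ and $\pfun_2$ pointwise on valuations so that, at each valuation $\epsilon$, $\pfun_1(\epsilon)$ and $\pfun_2(\epsilon)$ record exactly the probability mass that $\mu$ (the distribution from $\langle\epsilon,s\rangle$) places on the respective disjuncts, with the overlap accounted for once.

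Concretely, fix $\epsilon$ and let $A_i = \{\epsilon':\epsilon'\models\phi_i\}$. Since $\embed{\phi_1\lor\phi_2} = \embed{\phi_1} + \embed{\phi_2\land\neg\phi_1}$ as reward functions, we have $\mu(A_1\cup A_2) = \mu(A_1) + \mu(A_2\setminus A_1)$. So I set $\pfun_1(\epsilon) = \mu(A_1)$ and $\pfun_2(\epsilon) = \mu(A_2\setminus A_1)$ — that is, $\pfun_1(\epsilon) = \Expectation[\langle\epsilon,s\rangle](){\embed{\phi_1}}$ and $\pfun_2(\epsilon) = \Expectation[\langle\epsilon,s\rangle](){\embed{\phi_2\land\neg\phi_1}}$. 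Both are functions $\State\to[0,1]$ (probabilities are in $[0,1]$), hence legitimate expectation lower bounds. By construction $\pfun_1(\epsilon) + \pfun_2(\epsilon) = \mu(A_1\cup A_2) \ge \pfun(\epsilon)$; since $\epsilon$ was arbitrary, $\pfun_1 + \pfun_2 \ge \pfun$ everywhere. Moreover $\pfun_1(\epsilon) = \mu(A_1) \le \mu(A_1) $ trivially, and more to the point $\pfun_1(\epsilon)\le \Expectation[\langle\epsilon,s\rangle](){\embed{\phi_1}}$ holds with equality, so $\epsilon\models\pbox{\pfun_1}\phi_1$; likewise $\pfun_2(\epsilon) = \mu(A_2\setminus A_1) \le \mu(A_2) = \Expectation[\langle\epsilon,s\rangle](){\embed{\phi_2}}$, which gives $\epsilon\models\pbox{\pfun_2}\phi_2$ (here I do \emph{not} need equality, only the inequality, which is what the box demands).

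The main obstacle I anticipate is being careful about what "the expectation" means once the demonic choice is removed: the semantics in \cref{def:satisfaction} takes an infimum over policies, so I must argue that for a demonic-choice-free program this infimum is attained at the unique resolution and coincides with an honest probability measure $\mu$ over final valuations, additive over disjoint events. This is exactly the point the paper flags in \cref{sec:purely-probabilistic} ("calculating precise probabilities is not possible" only \emph{because} of demonic choice), and for purely probabilistic programs the rules of \cref{fig:semantics} leave no action choice, so $\xrightarrow{\pval}_\pi$ is independent of $\pi$ and $\mathrm{paths}_\pi(\langle\epsilon,s\rangle)$ together with $\Pr(\cdot)$ define a single distribution on $\final{\cdot}$. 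Once that is pinned down, finite additivity of $\mu$ and the decomposition $\embed{\phi_1\lor\phi_2} = \embed{\phi_1} + \embed{\phi_2\land\neg\phi_1}$ do all the work; the rest is bookkeeping. A secondary subtlety is measurability/well-definedness of $\pfun_1,\pfun_2$ as functions on the whole state space (not just at the fixed $\epsilon$), but since they are defined uniformly by "run $s$ from this state and measure the relevant event," they inherit the same status as the expectations already used throughout the paper.
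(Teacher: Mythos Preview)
Your proof is correct and follows essentially the same approach as the paper: both observe that without demonic choice the infimum in \cref{def:satisfaction} collapses to a single expected value (a genuine probability measure on final states), and then split the mass of $\phi_1\lor\phi_2$ between the two disjuncts. The only difference is cosmetic---you use the exact disjoint decomposition $\embed{\phi_1\lor\phi_2}=\embed{\phi_1}+\embed{\phi_2\land\neg\phi_1}$ and set $\pfun_2(\epsilon)=\mu(A_2\setminus A_1)$, whereas the paper uses the looser subadditivity $\embed{\phi_1\lor\phi_2}\le\embed{\phi_1}+\embed{\phi_2}$ and (implicitly) takes $\pfun_i=\ExpectedV[\langle\epsilon,s\rangle,\pi]{\embed{\phi_i}}$; either choice of witnesses works.
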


\noindent
Intuitively, the property holds, because each of the measure of the space of final states of the disjointed properties can be separated between the disjuncts.  This separation would not be possible with non-determinism, as shown in the following counterexample.

\begin{example}

  Consider the program \(\scoin ::= \code{x := H} \,\ndchoice\, \code{x := T} \). The following holds for any initial valuation \(\epsilon\):
  \[
    \epsilon \models \pbox[\scoin]{\func1} (x=\texttt{H} \lor x = \texttt{T})
  \]
  This happens because disjunction is weakening and a weaker property is harder to avoid, here impossible to avoid, for an adversary minimizing an expectation satisfaction. However, at the same time: \(\epsilon \models \pbox[\scoin]{\func{0}} (x = \text{\ttfamily H}) \text{ and } \epsilon \models \pbox[\scoin]{\func{0}} (x = \text{\ttfamily T}) \text{ and } \func0 + \func0 < \func1\).  Importantly, zero is the tightest expectation lower bound possible here.
  \qed
\end{example}
\section{Program Analysis with pDL}
\label{sec:case-studies}

In this section, we apply \PDL\ to reason about two illustrative examples: the Monty Hall game~(\cref{subsec:monty-hall}), and convergence of a Bernoulli random variable~(\cref{subsec:bernoulli}).

\subsection{Monty Hall Game}
\label{subsec:monty-hall}

In this section, we use \PDL\ to compute the probability of winning the \emph{Monty Hall game}.
In this game, a host presents 3 doors, one of which contains a prize and the others are empty, and a contestant must figure out the door behind which the prize is hidden.
To this end, the host and contestant follow a peculiar sequence of steps.
First, the location of the prize is non-deterministically selected by the host.
Secondly, the contestant chooses a door.
Then, the host opens an empty door from those that the contestant did not choose.
Finally, the contestant is asked whether she would like to switch doors.
We determine, using \PDL, what option increases the chances of winning the prize (switching or not).

\Cref{code:monty-hall} shows a \pGCL\ program, \lstinline|Monty_Hall|, modeling the behavior of host and contestant.
There are 4 variables in this program: \lstinline|prize| (door containing the prize), \lstinline|choice| (door selected by the contestant), \lstinline|open| (door opened by the host), \lstinline|switch| (Boolean indicating whether the user switches in the last step).
Note that the variable \lstinline|switch| is undefined in the program.
The value of \lstinline|switch| encodes the strategy of the contestant, so its value will be part of a \PDL\ specification that we study below.
Line 1 models the hosts's non-deterministic choice of the door for the prize.
Line 2 models the door choice of the contestant (uniformly over the 3 doors).
Lines 3-6 model the selection of the door to open, from the non-selected doors by the contestant.
Lines 7-10 model whether the contestant switches door or not.
For clarity and to reduce the size of the program, in lines 6 and 8, we use a shortcut to compute the door to open and to switch, respectively.
Note that for $x,y \in \{0,1,2\}$ the expression $z = (2x-y)\;\mathbf{mod}\;3$ simply returns $z \in \{0,1,2\}$ such that $z \not= x$ and $z\not=y$.
Similarly, in line 4, the expressions $y=(x+1)\;\mathbf{mod}\;3, z=(x+2)\;\mathbf{mod}\;3$ ensure that $y\not=x$, $z\not=x$ and $y \not= z$.
This shortcut computes the doors that the host may open when the contestant's choice (line 2) is the door with the prize.

\begin{lstlisting}[mathescape=true,xleftmargin=.25\textwidth,label={code:monty-hall},caption={Monty Hall Program (\lstinline|Monty_Hall|)}]
prize := 0 $\ndchoice$ (prize := 1 $\ndchoice$ prize := 2);
choice := 0 $\pchoice{1/3\;}$ (choice:=1 $\pchoice{1/2\;}$ choice:=2);
if (prize = choice)
    open := (prize+1)%3 $\ndchoice$ open := (prize+2)%3;
else
    open := (2*prize-choice)%3;
if (switch)
    choice := (2*choice-open)%3
else
    skip
\end{lstlisting}

\noindent
We use \PDL\ to find out the probability of the contestant selecting the door with the prize.
To this end, we check satisfaction of the following formula, and solve it for \pfun.
\begin{equation}
\epsilon[\textit{switch} \mapsto \textit{true}] \models [\text{\lstinline|Monty_Hall|}]_\pfun(\textit{choice} = \textit{prize}).
\end{equation}
\noindent
First, we show that $\pfun = \min(\pfun_0,\pfun_1,\pfun_2)$ where each $\pfun_i$ is the probability for the different locations of the prize.
Formally, we use \cref{thm:choices}.\ref{item:ndchoice} (twice) as follows
\begin{align*}
&\epsilon  \models [\text{\lstinline|prize:=0;...|}]_{\pfun_0}(\textit{choice} = \textit{prize}) \textit{ and } \\
&\epsilon  \models [\text{\lstinline|prize:=1;...|}]_{\pfun_1}(\textit{choice} = \textit{prize}) \textit{ and } \\
&\epsilon  \models [\text{\lstinline|prize:=2;...|}]_{\pfun_2}(\textit{choice} = \textit{prize})   \textit{ imply } \\
&\epsilon  \models [\text{\lstinline|Monty_Hall|}]_{\min(\pfun_0,\pfun_1,\pfun_2)}(\textit{choice} = \textit{prize})
\end{align*}

\noindent
For each $\pfun_i$, we compute the probability for each branch of the probabilistic choice.
To this end, we use \cref{thm:choices}.\ref{item:extension} as follows:
\begin{align*}
&\epsilon \models [\text{\lstinline|choice:=0;...|}]_{\pfun_{i0}}(\textit{choice} = \textit{prize}) \; \textit{ and } \\
&\epsilon \models [\text{\lstinline|(choice:=1|} \pchoice{1/2} \text{\lstinline|choice:=2);...|}]_{\pfun_{i1}}(\textit{choice} = \textit{prize}) \; \textit{ imply } \\
&\epsilon \models [\text{\lstinline|choice:=0|} \pchoice{1/3} \text{\lstinline|(choice:=1|} \pchoice{1/2} \text{\lstinline|choice:=2|}\text{\lstinline|);...|}]_{1/3\cdot\pfun_{i0} + 2/3\cdot\pfun_{i1}}(\textit{choice} = \textit{prize}).
\end{align*}
and apply it again for $\pfun_{i1}$ to resolve the inner probabilistic choice:
\begin{align*}
&\epsilon \models [\text{\lstinline|choice:=1;...|}]_{\pfun_{i10}}(\textit{choice} = \textit{prize}) \; \textit{ and } \\
&\epsilon \models [\text{\lstinline|choice:=2;...|}]_{\pfun_{i11}}(\textit{choice} = \textit{prize}), \; \textit{implies } \\
&\epsilon \models [\text{\lstinline|(choice:=1|} \pchoice{1/2} \text{\lstinline|choice:=2);...|}]_{1/2\cdot\pfun_{i10} + 1/2\cdot\pfun_{i11}}(\textit{choice} = \textit{prize})
\end{align*}
These steps show that $\pfun_i = 1/3\cdot\pfun_{i0} + 2/3\cdot1/2\cdot\pfun_{i10} + 2/3\cdot1/2\cdot\pfun_{i11}$ where $\pfun_{i0}$, $\pfun_{i10}$ and $\pfun_{i11}$ are the probabilities for the paths with $\textit{choice}$ equals to 0, 1 and 2, respectively.

Let us focus on the case $\pfun_1$.
This is the case when the prize is behind door 1, $\epsilon[\textit{prize} \mapsto 1]$.
In what follows, we explore the three possible branches of the probabilistic choice.
Consider the case where the user chooses door 1, i.e., $\epsilon[\textit{choice} \mapsto 1]$ and
$$
\epsilon \models [\text{\lstinline[mathescape=true]|if (prize = choice) \{$s_0$\} else \{$s_1$\};...|}]_{\pfun_{110}}(\textit{choice}=\textit{prize})
$$
where \lstinline[mathescape=true]|$s_0$| and \lstinline[mathescape=true]|$s_1$| correspond to lines 4 and 6 in~\cref{code:monty-hall}, respectively.
Since $\epsilon \models \textit{prize} = \textit{choice}$ holds and by \cref{thm:loops-conditionals}.\ref{item:if-true} we derive that
$$
\epsilon \models [\text{\lstinline[mathescape=true]|$s_0$;...|}]_{\pfun_{110}}(\textit{choice}=\textit{prize}).
$$
Note that $\pfun_{110}$ remains unchanged.
Statement \lstinline[mathescape=true]|$s_1$| contains a non-deterministic choice, so we apply \cref{thm:choices}.\ref{item:ndchoice} to derive $\pfun_{110} = \min(\pfun_{1100},\pfun_{1101})$ where each $\pfun_{110i}$ correspond to the cases where $\epsilon[\textit{open} \mapsto 2]$ and $\epsilon[\textit{open} \mapsto 0]$, respectively.
Since $\textit{switch} = \textit{true}$ both branches execute line 8, and the probabilities remain the same (\cref{thm:loops-conditionals}.\ref{item:if-true}).
A simple calculation shows that after executing line 8 $\epsilon \not\models (\textit{prize} = \textit{choice})$ for both cases.
For instance, consider
$$
\epsilon[\textit{open} \mapsto 0] \models [\code{choice := (2*choice-open)\%3}]_{\pfun_{1100}}(\textit{prize} = \textit{choice}).
$$
By~\cref{thm:simple-statements}.\ref{item:assignment}
$\epsilon[\textit{choice} \mapsto (2*1-0)\%3 = 2]$, which results in $\textit{prize} \not= \textit{choice}$.
By the universal lower bound rule (\cref{prop:weakening}.\ref{item:universal-lb}) we derive $\pfun_{1100} = 0$.
The same derivations show that $\pfun_{1101} = 0$, and, consequently, $\pfun_{110} = 0$.

The same reasoning shows that $\textit{prize}=\textit{choice}$ holds for the cases where $\textit{choice} \not=1$ in line 2, i.e., $\pfun_{i0}$ and $\pfun_{i11}$---we omit the details as they are analogous to the steps above.
In these cases, by \cref{thm:simple-statements}.\ref{item:skip} we derive that $\pfun_{i0} = 1$ and $\pfun_{i11} = 1$.
Recall that $\pfun_{110} = 0$ (see above), then we derive that $\pfun_1 = 1/3\cdot 1 \; + \; 2/3\cdot1/2 \cdot 0 \; + \; 2/3\cdot1/2\cdot 1$.
Consequently, $\pfun_1 = 1/3 + 1/3 = 2/3$.
Analogous reasoning shows that all $\pfun_i = 2/3$.

To summarize, the probability of choosing the door with the prize when switching is at most 2/3.
%
%
In other words, we have proven that switching door maximizes the probability of winning the prize.

\subsection{Convergence of a Bernoulli random variable}
\label{subsec:bernoulli}

We use \PDL\ to study the convergence of a program that estimates the expectation of a Bernoulli random variable.
To this end, we compute the probability that an estimated expectation is above an error threshold $\delta > 0$.
This type of analysis may be of practical value for verifying the implementation of estimators for statistical models.

Consider the following \pGCL\ program for estimating the expected value of a Bernoulli random variable (Technically the program computes the number of successes out of \(n\) trials, and we will put the estimation into the post-condition):
\begin{lstlisting}[mathescape=true,xleftmargin=.25\textwidth,label={code:bernoulli},caption={Bernoulli Program (\lstinline|Bernoulli|)}]
i := 0; c := 0;
while (i < $n$) {
  s := 0 $\pchoice{\mu\,}$ s := 1;
  c := c + s;
  i := i + 1
}
\end{lstlisting}

\noindent
Intuitively, \lstinline|Bernoulli| computes the average of $n$ Bernoulli trials $X_i$ with mean $\mu$, i.e., $X = \sum_i X_i/n$.
It is well-known that $\mathrm{E}[X] = \mu$ (e.g.,~\cite{intro-probability}).
Each $X_i$ can be seen as a sample or measurement to estimate $\mu$.
A common way to study convergence is to check the probability that the estimated mean $X$ is within some distance $\delta > 0$ of $\mu$, i.e., $\Pr(|X-\mu| > \delta)$.
In \lstinline|Bernoulli|, a sample $X_i$ corresponds to the execution of the  probabilistic choice $\pchoice{\mu\,}$ in line 3 of \cref{code:bernoulli}.
After running all loop iterations, variable $c$ contains the sum of all the samples, i.e., $c = \sum_i X_i$.
Thus, $X$ is equivalent to $c/n$ and the specification of convergence can be written as $\Pr(|c/n - \mu| > \delta)$.
Note that this specification is independent of the implementation of the program.
The same specification can be used for any program estimating $\mu$---by simply replacing $X$ with the term estimating $\mu$ in the program.

In \PDL, we can study the convergence of this estimator by checking
$$
\epsilon \models [\text{\lstinline|Bernoulli|}]_\pfun(|c/n - \mu| > \delta)
$$
for some value of $\mu \in [0,1], \, \delta > 0$ and $n \in \mathbb{N}$.
Note that, since the program contains no non-determinism, $\pfun = \Pr(|X-\mu| > \delta)$. We describe the reasoning to compute $\pfun$.

First, note that the while-loop in \lstinline|Bernoulli| is bounded.
Therefore, we can replace it with a sequence of $n$ iterations of the loop body.
Let $s_i$ denote the $i$th iteration of the loop (lines 3-4 in \cref{code:bernoulli}).
We omit for brevity the assignments in line 1 of \cref{code:bernoulli} and directly proceed with a state $\epsilon[c \mapsto 0, i \mapsto 0]$.
Consider the first iteration of the loop, i.e., $i=0$.
By~\cref{thm:loops-conditionals}.\ref{item:while-unfold} we can derive
$$
\epsilon \models [\text{\lstinline[mathescape=true]|if (0 < n) \{$s_0$; while (i < n) \{$s_1$\}\} else \{skip\}\}|}]_\pfun(|c/n - \mu| > \delta).
$$
Assume $\epsilon \models 0 < n$ holds, then by~\cref{thm:loops-conditionals}.\ref{item:if-true} we derive
$$
\epsilon \models [\text{\lstinline[mathescape=true]|$s_0$; while (i < n) \{$s_1$\}|}]_\pfun(|c/n - \mu| > \delta).
$$
By applying the above rules repeatedly we can rewrite \lstinline|Bernoulli| as
\begin{align*}
\epsilon
\models [\text{\lstinline[mathescape=true]|$s_0$;...;$s_{n-1}$;skip|}]_\pfun(|c/n - \mu| > \delta)
\label{eq:bernoulli-init}
\end{align*}
with the \lstinline|skip| added in the last iteration of the loop by \cref{thm:loops-conditionals}.\ref{item:while-unfold} and \ref{thm:loops-conditionals}.\ref{item:if-false}.

Second, we compute the value of $\pfun$ for a possible path of \lstinline|Bernoulli|.
Consider the case when $c=0$ after executing the program.
That is,
$$\epsilon \models [\text{\lstinline|$s_0$;...;$s_{n-1}$;skip|}]_\pfun(c=0).$$
This only happens for the path where the probabilistic choice is resolved as \lstinline|c:=0| for all loop iterations.
Applying~\cref{thm:sequencing} we derive
\begin{align*}
\textit{If } \models (c=0) \to [\text{\lstinline[mathescape=true]|$s_1$;$\ldots$;$s_{n-1}$;skip|}]_{\pfun'}&(c = 0), \textit{ then } \\
&\epsilon \models [\text{\lstinline|s|}_0;\ldots;\text{\lstinline|s|}_{n-1};\text{\lstinline|skip|}]_{\Expectation{\pfun'\downarrow(c=0)}}(c = 0).
\end{align*}
Here $\Expectation{\,}$ is computed over $\MDP_{\text{\lstinline[mathescape=true]|$s_0$|}}$ (cf.~\cref{thm:sequencing}).
For \lstinline|Bernoulli|, this expectation is computed over the two paths resulting from the probabilistic choice in~\cref{code:bernoulli}, line 3.
Since only the left branch satisfies $c=0$ and it is executed with probability $\mu$, then $\Expectation{\pfun'} = \mu \pfun'$.
Applying this argument for each iteration of the loop we derive that $\epsilon \models [\text{\lstinline[mathescape=true]|$s_0$;...;$s_{n-1}$;skip;|}]_\pfun(c=0)$ holds for $\pfun = \mu^n$.
Similarly, consider the case where $c=1$ after running all iterations of the loop, due to the first iteration resulting in \lstinline|c:=1| and the rest \lstinline|c:=0|.
Then, we apply \cref{thm:sequencing} as follows
\begin{align*}
\textit{If } \models (c=1) \to [\text{\lstinline[mathescape=true]|$s_1$;$\ldots$;$s_{n-1}$;skip|}]_{\pfun'}&(c = 1), \textit{ then } \\
&\epsilon \models [\text{\lstinline|s|}_0;\ldots;\text{\lstinline|s|}_{n-1};\text{\lstinline|skip|}]_{\Expectation{\pfun'\downarrow(c=1)}}(c = 1).
\end{align*}
In this case, $\Expectation{\pfun'} = (1-\mu) \pfun'$, as the probability of $c=1$ is $(1-\mu)$ (cf.~\cref{code:bernoulli}~line 3).
Since, in this case, the remaining iterations of the loop result in \lstinline|c:=0|, and from our reasoning above, we derive that $\pfun' = \mu^{n-1}$.
Hence, $\pfun = (1-\mu)\mu^{n-1}$.
In general, by repeatedly applying these properties, we can derive that the probability of a path is $\mu^i(1-\mu)^j$ where $i$ is the number loop iterations resulting in \lstinline|c:=0| and $j$ the number of loop iterations resulting in \lstinline|c:=1|.

\begin{figure}[t!]
  \centering
  \includegraphics[width=.6\textwidth]{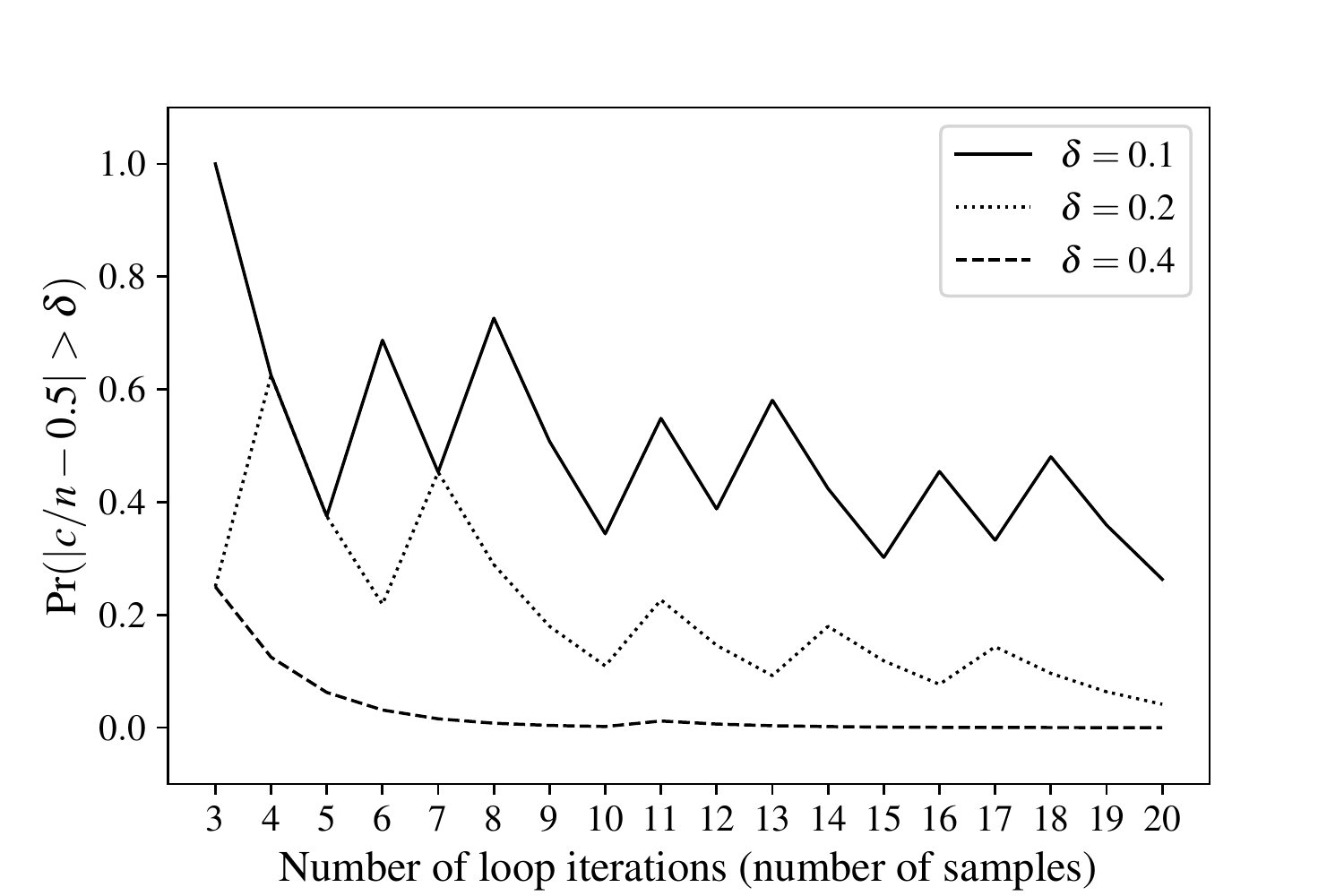}
  \caption{\label{fig:convergence} Convergence of Bernoulli random variable with $\mu=0.5$.}
\end{figure}

Now we return to our original problem $\epsilon \models [\code{Bernoulli}]_\pfun(|c/n - \mu| > \delta)$.
Recall from~\cref{def:satisfaction} that $\pfun$ is the sum of the probabilities over all the paths that satisfy the post-condition.
\lstinline|Bernoulli| has $2^n$ paths (two branches per loop iteration).
Therefore, we conclude that
$
\pfun = \sum_{i \in \Phi} \mu^{\textit{zeros(i)}}(1-\mu)^{\textit{ones(i)}}
$
where $\textit{zeros}(\cdot), \textit{ones}(\cdot)$ are functions returning the number of zeros and ones in the binary representation of the parameter, respectively, and $\Phi = \{\,i \in 2^n \, \mid \, |\textit{ones}(i)/n - \mu| > \delta\,\}$ enumerates all paths in the program satisfying the post-condition.
Note that the binary representation of $0,\ldots,2^n$ conveniently captures each of the possible executions of \lstinline|Bernoulli|.

The result above is useful to examine the convergence of \lstinline|Bernoulli|.
It allows us to evaluate the probability of convergence for increasing number of samples and different values of $\mu$ and $\delta$.
As an example, \cref{fig:convergence} shows the results for $\mu=0.5$, $\delta \in \{0.1,0.2,0.4\}$ and up to $n=20$ iterations of the loop.
The dotted and dashed lines in the figure show that with $20$ iterations the probability of having an error $\delta>0.2$ is less than $5\%$.
However, for an error $\delta > 0.1$ the probability increases to more than $20\%$.
\section{Conclusion}

This paper has proposed \PDL, a specification language for probabilistic programs---the first dynamic logic for probabilistic programs written in \pGCL. Like \pGCL, \PDL contains probabilistic and demonic choice. Unlike \pGCL, it includes programs as first-order entities in specifications and allows forward reasoning capabilities as usual in dynamic logic. We have defined the model-theoretic semantics of \PDL and shown basic properties of the newly introduced p-box modality. We demonstrated the reasoning capabilities on two well-known examples of probabilistic programs. In the future, we plan to develop a deductive proof system for \PDL supported by tools for (semi-)automated reasoning about \pGCL programs.  Furthermore, the current definition of \PDL gives no syntax to the expectations. Batz et al.\ propose a specification language for real-valued functions that is closed under the construction of weakest pre-expectations \cite{BatzSpec2021}; such a language could be used to express assertions for \pGCL programs. It would be interesting to integrate these advances into \PDL.

\paragraph{Acknowledgments} This work was supported by the Research Council of Norway via SIRIUS (project no. 237898).

\bibliographystyle{splncs}
\bibliography{refs}

\appendix

\vfill
\pagebreak

\section{Proofs}


\subsection{Proofs and Auxiliary Properties for \cref{sec:preliminaries}}

\paragraph{Expectations in an MDP.} We start with a few basic order and distribution properties of expectations that are useful in later proofs.

\begin{lemma}\label{lemma:expectation-properties}

  Given a program \(s\), the MDP \(\mathcal{M}_s\) representing its semantics, its state \( \sigma \), and reward functions \(\func{r}\), \(\func{r}_1\), \(\func{r}_2\) (into \([0,1]\) we have that

  \begin{enumerate}

    \item \( \Expectation[\sigma](){\func r} \) is a non-negative function for any non-negative reward function \(\func r\).
    \label{item:expectation-nonnegative}

    \item If \(\func r_1 \leq \func r_2\) everywhere  then \(\Expectation[\sigma](){\func r_1} \leq \Expectation[\sigma](){\func r_2}\) everywhere.
    \label{item:expectation-domination}

    \item \( \Expectation[\sigma](){\func r_1} + \Expectation[\sigma](){\func r_2} \leq \Expectation[\sigma](){\func r_1 + \func r_2} \) everywhere.
      \label{item:expectation-monotonic-plus}

    \item For a constant \(c \in \mathbb R\) we have that \( \max [\Expectation[\sigma](){\func r} + c, 0] \leq \Expectation[\sigma]{[\max (\func r + c, 0)]} \)
    \label{item:expectation-add-const}

    \item For a constant \(c \in [0,1]\) we have that \( c  \Expectation[\sigma](){\func r} = \Expectation[\sigma]{[ c  \func{r}]} \)
    \label{item:expectation-mult-const}

    \item \((1 - \inf_\pi \func{r}) = \sup_\pi \, (1-\func{r})\)%
    \text{ ~and~ } \((1 - \sup_\pi \func{r}) = \inf_\pi \, (1-\func{r})\)
    \label{item:inf-sup}

  \item \( \ExpectedV[\sigma,\pi](){\lambda \sigma' \cdot \ExpectedV[\pi,\sigma'](){\func r}} = \ExpectedV[\sigma,\pi](){\func r}\)
    \label{item:nested-expected-vs}

  \item \( \Expectation[\sigma](){\lambda \sigma' \cdot \Expectation[\sigma'](){\func r}} \leq \Expectation[\sigma](){\func r}\)
    \label{item:nested-expectations}

  \end{enumerate}

\end{lemma}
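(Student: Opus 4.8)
\noindent The plan is to push the infimum over policies outside of each identity, reducing every item to (a) the analogous statement for the policy-fixed expectations $\mathbb E_{\sigma,\pi}(r)=\sum_{\overline\sigma\in\mathrm{paths}_\pi(\sigma)}\Pr(\overline\sigma)\,r(\mathrm{final}(\overline\sigma))$, which are absolutely convergent non-negative series --- bounded rewards together with the standing almost-sure-termination assumption make every rearrangement legitimate and give $\sum_{\overline\sigma\in\mathrm{paths}_\pi(\sigma)}\Pr(\overline\sigma)=1$, cf.\ the discussion after \cref{eq:expectation} --- and (b) a small toolkit of order facts about bounded real families indexed by policies: $\inf$ is monotone, $\inf_\pi(a_\pi+b_\pi)\ge\inf_\pi a_\pi+\inf_\pi b_\pi$, $\inf_\pi(c\,a_\pi)=c\inf_\pi a_\pi$ for $c\ge 0$, and $\inf_\pi a_\pi=-\sup_\pi(-a_\pi)$.

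Items 1--3, 5 and 6 then come out mechanically. For item 1, each $\mathbb E_{\sigma,\pi}(r)$ is a sum of non-negative terms, and an infimum of non-negatives is non-negative. For item 2, domination is termwise in the series, so $\mathbb E_{\sigma,\pi}(r_1)\le\mathbb E_{\sigma,\pi}(r_2)$ for every $\pi$, and $\inf$ is monotone. For item 3, linearity of the series gives $\mathbb E_{\sigma,\pi}(r_1+r_2)=\mathbb E_{\sigma,\pi}(r_1)+\mathbb E_{\sigma,\pi}(r_2)$, and superadditivity of $\inf$ yields $\mathbf E_\sigma(r_1+r_2)=\inf_\pi\bigl(\mathbb E_{\sigma,\pi}(r_1)+\mathbb E_{\sigma,\pi}(r_2)\bigr)\ge\mathbf E_\sigma(r_1)+\mathbf E_\sigma(r_2)$. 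Item 5 is linearity again, $\mathbb E_{\sigma,\pi}(c\,r)=c\,\mathbb E_{\sigma,\pi}(r)$, plus the fact that $\inf$ commutes with scaling by $c\in[0,1]\subseteq[0,\infty)$. Item 6 is just the reflection identity from the toolkit applied to the family at hand, after absorbing the constant $1$ into the sup/inf. For item 4, I would first use $\sum_{\overline\sigma\in\mathrm{paths}_\pi(\sigma)}\Pr(\overline\sigma)=1$ so that adding a constant $c$ to a reward adds exactly $c$ to the expectation: $\mathbb E_{\sigma,\pi}(r+c)=\mathbb E_{\sigma,\pi}(r)+c$ (the reward $\max(r+c,0)$ is bounded and non-negative, so its expectation is well defined by the same argument as for $[0,1]$-rewards). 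Since $t\mapsto\max(t,0)$ is monotone and dominates the identity, $\mathbb E_{\sigma,\pi}\bigl(\max(r+c,0)\bigr)\ge\max\bigl(\mathbb E_{\sigma,\pi}(r+c),0\bigr)=\max\bigl(\mathbb E_{\sigma,\pi}(r)+c,0\bigr)\ge\max\bigl(\mathbf E_\sigma(r)+c,0\bigr)$ for every $\pi$ (the last step by item 2 and monotonicity of $t\mapsto\max(t+c,0)$); taking $\inf_\pi$ on the left gives the claim.

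The only genuinely delicate items are 7 and 8, the law of total expectation. For item 7 (fixed policy), the substantive form is the one-step decomposition $\mathbb E_{\sigma,\pi}(r)=\sum_{\sigma_1}\mathbf P(\sigma,\pi(\sigma))(\sigma_1)\,\mathbb E_{\sigma_1,\pi}(r)$: every path from $\sigma$ factors uniquely as a first transition $\sigma\to\sigma_1$ followed by a path from $\sigma_1$, with $\Pr$ multiplying along the split, and one regroups the non-negative family of path probabilities accordingly. Reading $\mathbb E_{\sigma,\pi}$ of the function $\sigma'\mapsto\mathbb E_{\sigma',\pi}(r)$ this way --- and noting that on final states that function is simply $r$, which handles the base case --- delivers the identity. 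For item 8, take the infimum over $\pi$: for every $\pi$, using $\mathbf E_{\sigma'}(r)\le\mathbb E_{\sigma',\pi}(r)$ with item 2 and then item 7, $\mathbb E_{\sigma,\pi}\bigl(\lambda\sigma'.\mathbf E_{\sigma'}(r)\bigr)\le\mathbb E_{\sigma,\pi}\bigl(\lambda\sigma'.\mathbb E_{\sigma',\pi}(r)\bigr)=\mathbb E_{\sigma,\pi}(r)$, so $\inf_\pi$ on the right gives $\mathbf E_\sigma\bigl(\lambda\sigma'.\mathbf E_{\sigma'}(r)\bigr)\le\mathbf E_\sigma(r)$ (within a single MDP this is in fact an equality, since $\mathbf E_{\sigma'}(r)=r(\sigma')$ at final states and the outer expectation only samples final states; the inequality shape is the one reused in \cref{thm:sequencing}, where the two levels live in different MDPs). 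The main obstacle is thus making the prefix/suffix path factorization behind item 7 rigorous --- checking that $\mathrm{paths}_\pi$ is closed under it and that the rearrangement of the (possibly infinite) family of path probabilities is licensed, which is exactly where almost-sure termination and boundedness of rewards are needed.
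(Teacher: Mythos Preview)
Your treatment of items 1--6 is essentially the paper's: both reduce everything to linearity and monotonicity of the policy-fixed expected value $\mathbb E_{\sigma,\pi}$ together with the standard order facts about $\inf$ that you list. The only cosmetic difference is in item~4, where the paper first commutes the outer $\max(\cdot,0)$ with $\inf_\pi$ via a case split and only then truncates the random variable, whereas you truncate at the level of $\mathbb E_{\sigma,\pi}$ first and take $\inf_\pi$ last; both orderings work.

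The genuine divergence is in items 7--8. You correctly note that, read literally inside a single MDP $\mathcal M_s$, both items are trivial: the outer expectation samples only final states $\sigma'$, where $\mathbb E_{\sigma',\pi}(\boldsymbol r)=\boldsymbol r(\sigma')$, so the nesting collapses --- and then your one-step Markov decomposition is an unnecessary detour. The paper, however, silently proves (and later uses, in \cref{thm:sequencing}) the \emph{compositional} reading: outer expectation in $\mathcal M_{s_1}$, inner in $\mathcal M_{s_2}$, right-hand side in $\mathcal M_{s_1;s_2}$. Its argument for item~7 is not an inductive one-step unfolding but a direct double-sum expansion: write both $\mathbb E$'s as path sums, distribute, and identify the concatenation $\overline\sigma_1\overline\sigma_2$ of a path in $\mathcal M_{s_1}$ with a path in $\mathcal M_{s_2}$ from $\mathrm{final}(\overline\sigma_1)$ as a path in $\mathcal M_{s_1;s_2}$, with $\Pr(\overline\sigma_1\overline\sigma_2)=\Pr(\overline\sigma_1)\Pr(\overline\sigma_2)$. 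For item~8 the paper's step matches yours (replace the inner $\inf_{\pi'}$ by the outer $\pi$, pointwise raising the inner reward, then apply item~7), phrased as ``infimum over a smaller set of policy pairs.'' So your sketch is correct for the lemma as stated; just be aware that what the paper actually establishes --- and what the sequencing theorem consumes --- is the two-MDP version, and for that the full path-concatenation bijection is more direct than iterating the one-step recursion.
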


\begin{proof}[\Cref{lemma:expectation-properties}]

  \begin{enumerate}

    \item[\ref{lemma:expectation-properties}.\ref{item:expectation-monotonic-plus}] We have equality in this property if there is no non-determinism (a Markov Chain instead of an MDP). Then the property is just a regular property of expected value of random variables.  For an MDP we have:
    \begin{alignat*}{3}
      \Expectation[\sigma](){\func r_1} + \Expectation[\sigma](){\func r_2}
      & = \left[ \inf_\pi \ExpectedV[\sigma,\pi](){\func r_1}  \right] +
      \left[ \inf_\pi \ExpectedV[\sigma,\pi](){\func r_2}  \right]
      & \Since{def.\ of expectation, \cref{eq:expectation}}
      \\
      & \leq \inf_\pi \left[ \ExpectedV[ \sigma, \pi ](){\func r_1} +
      \ExpectedV[ \sigma, \pi ](){\func r_2}  \right]
      & \Since[5mm]{\(\inf f + \inf g \leq \inf (f+g)\)}
      \\
      & = \inf_\pi \ExpectedV[ \sigma, \pi ](){\func r_1 + \func r_2}
      & \Since{\ExpectedV[]\ distributes with +}
      \\
      & = \Expectation[\sigma](){\func r_1 + \func r_2}
      & \Since{\cref{eq:expectation}}
    \end{alignat*}

    \medskip

    \item[\ref{lemma:expectation-properties}.\ref{item:expectation-add-const}] We have equality if \(c\)  is non-negative.
    \begin{alignat*}{3}
      \max \left( \Expectation[\sigma](){\func r} + c, 0 \right)
      & = \max \left [ \left ( \inf_\pi \ExpectedV[\sigma, \pi](){\func r}  \right ) + c, 0 \right ]
      & \Since{\cref{eq:expectation}}
      \\
      & = \max \left[\inf_\pi \left(\ExpectedV[\sigma, \pi](){\func r} + c \right), 0 \right]
      & \Since{shift $\inf$ by a const.}
      \\
      & = \inf_\pi \max \left [ \ExpectedV[\sigma, \pi](){\func r} + c, 0 \right ]
      & \Since{argue by cases of max}
      \\
      & = \inf_\pi \max [ \ExpectedV[\sigma, \pi](){\func r + c} , 0 ]
      & \Since[2mm]{shift random variable by $c$}
      \\
      & \leq \inf_\pi \max \left \{ \ExpectedV[\sigma, \pi]{[\max (\func r + c,0)]} , 0 \right \}
      & \Since{truncate random var}
      \\
      & = \inf_\pi  \ExpectedV[\sigma, \pi]{[\max (\func r + c,0)]}
      & \Since[20mm]{\ExpectedV[\sigma, \pi]{[\max (\func r +c ,0)]} nonneg.}
      \\
      & = \Expectation[\sigma]{[\max ( \func r + c, 0 )]}
      & \Since{\cref{eq:expectation}}
    \end{alignat*}

    \medskip

    \item[\ref{lemma:expectation-properties}.\ref{item:expectation-mult-const}] The equality follows from the fact that infimum and expected value both commute with a multiplication by a non-negative constant.

    \medskip

    \item[\ref{lemma:expectation-properties}.\ref{item:nested-expected-vs}] First a comment what the lambda notation in parentheses means: we mean that the random variable in the outer expected value is itself an expected value in final states of the outer Markov Chain (which are the initial states of the inner Markov Chain). As usual a random variable in a Markov Chain (or a reward in a Markov Chain) is a function that takes as an argument the state it is calculated from. We explicitly name this argument \(\sigma\). Alternatively, we could have put a dot (\(\cdot\)) instead of \(\sigma\) in the subscript of the inner expected value: \(\lambda \sigma \cdot \ExpectedV[\pi,\sigma](){\func r}\) means the same as \(\ExpectedV[\pi,\cdot](){\func r}\).

    Now that the notation is out of the way, let's prove the lemma by a sequence of equalities:
    \begin{alignat*}{3}
      & \ExpectedV[\sigma,\pi](){\lambda \sigma' \cdot \ExpectedV[\sigma', \pi](){\func r}}
      & \Since{assumption LHS}
      \\
      & =  \sum_{\overline\sigma_1\in \paths{\sigma} } \Pr (\overbar\sigma_1)  \ExpectedV[\final{\overline\sigma_1}, \pi](){\func r}
      & \Since{\cref{eq:expectation}}
      \\
      & =  \sum_{\overline\sigma_1\in \paths{\sigma} } \hspace{-3mm} \Pr (\overbar\sigma_1)  \left [ \sum_{\overline\sigma_2\in\paths{\final{\overline\sigma_1}}} \hspace{-6mm} \Pr(\overline\sigma_2)  \func{r}(\final{\overline\sigma_2}) \right ]
      & \Since{\cref{eq:expectation}}
      \\
      & =  \sum_{\overline\sigma_1\in \paths{\sigma} } \sum_{\overline\sigma_2\in\paths{\final{\overline\sigma_1}}} \hspace{-3mm} \Pr (\overbar\sigma_1)   \Pr(\overline\sigma_2)  \func{r}(\final{\overline\sigma_2})
      & \Since{distribute multiplication}
      \\
      & =  \sum_{\overline\sigma_1\overline\sigma_2\in \paths{\sigma} }  \hspace{-3mm} \Pr (\overline\sigma_1\overline\sigma_2)   \func{r}(\final{\overline\sigma_1\overline\sigma_2})
      & \Since[12mm]{concatenate paths, now in \(\MDP_{s_1; s_2}\)}
      \\
      & = \ExpectedV[\sigma,\pi](){\func r}
      & \Since{\cref{eq:expectation}}
    \end{alignat*}

    Note that when we expand expected values using \cref{eq:expectation} the policy \(\pi\) disappears in the notation---it is implicitly included in the probability mass function \(\Pr\). We chose not to subscript it for readability. Also the paths set starting in \(\sigma\) in the penultimate row refer to the paths in \(\MDP_{s_1;s_2}\) (so the set of the longer paths).

    \medskip

    \item[\ref{lemma:expectation-properties}.\ref{item:nested-expectations}] The use of the lambda notation is the same as in the previous point, so see above.
    \begin{alignat*}{3}
      & \Expectation[\sigma](){\lambda \sigma' \cdot \Expectation[\sigma'](){\func r}}
      & \Since{assumption LHS}
      \\
      & = \inf_{\pi} \ExpectedV[\sigma,\pi](){\lambda \sigma' \cdot \inf_{\pi'} \ExpectedV[\sigma',\pi'](){\func r}}
      & \Since{\cref{eq:expectation} twice}
      \\
      & \leq \inf_{\pi} \ExpectedV[\sigma,\pi](){\lambda \sigma' \cdot \ExpectedV[\sigma',\pi](){\func r}}
      & \Since{infimum over a smaller set of policy pairs}
      \\
      & \leq \inf_{\pi} \ExpectedV[\sigma,\pi](){\func r}
      & \Since{\cref{lemma:expectation-properties}.\ref{item:nested-expected-vs} above}
      \\
      & = \Expectation[\sigma](){\func r}
      & \Since{\cref{eq:expectation}}
    \end{alignat*}

    \end{enumerate}
    \qed

\end{proof}

\paragraph{Boolean embeddings (characteristic functions).}

For these properties it is useful to equate Boolean formulae with sets of states satisfying them (since these properties make sense for any classical logic, not necessarily PDL).  The logical connectives then translate to set algebra in standard manner (conjunction is a intersection, etc.)

\begin{lemma}\label{lemma:embeddings}
  Consider formulae (sets of states), \(\phi\), \(\phi_1\) and \(\phi_2\).  Then

  \begin{enumerate}

    \item \(\embed{\phi_1} + \embed{\phi_2} - 1 \leq \embed{\phi_1 \land \phi_2} \enspace \text{everywhere}\)\label{item:weak-morgan}

    \smallskip

    \item \( \embed{\neg \phi} = 1 - \embed\phi\) ~~and~~ \( \embed{ \phi } = 1 - \embed{\neg \phi}\) ~~everywhere \label{item:negation}

  \end{enumerate}
\end{lemma}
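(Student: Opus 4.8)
The plan is to reduce both claims to pointwise statements about the two values a characteristic function can take, exploiting the fact (noted just before the lemma) that the ambient logic is classical and two-valued, so $\embed{\phi}(\sigma)\in\{0,1\}$ with $\embed{\phi}(\sigma)=1$ iff $\sigma$ satisfies $\phi$. Everything is then a one-line check at an arbitrary fixed state, and the ``everywhere'' qualifier comes for free since $\sigma$ is arbitrary.

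For the negation identities (item~\ref{item:negation}), I would argue directly from the definition: $\sigma$ satisfies $\neg\phi$ exactly when it does not satisfy $\phi$, so $\embed{\neg\phi}(\sigma)$ and $\embed{\phi}(\sigma)$ are complementary bits, i.e. $\embed{\neg\phi}=1-\embed{\phi}$ everywhere. The second equation follows by rearranging, or equivalently by instantiating the first with $\phi:=\neg\phi$ and cancelling the double negation. No case split beyond the dichotomy $\sigma\models\phi$ or $\sigma\not\models\phi$ is needed.

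For item~\ref{item:weak-morgan}, I would first record that under the set-of-states reading conjunction is intersection, hence $\embed{\phi_1\land\phi_2}=\embed{\phi_1}\cdot\embed{\phi_2}$ pointwise. Fixing $\sigma$ and writing $a=\embed{\phi_1}(\sigma)$, $b=\embed{\phi_2}(\sigma)$, the inequality to prove becomes $a+b-1\le ab$, i.e. $0\le(1-a)(1-b)$, which holds since $a,b\in[0,1]$ (indeed in $\{0,1\}$) makes both factors non-negative. Alternatively one can argue by the four cases of which of $\phi_1,\phi_2$ hold at $\sigma$: the left-hand side is $1$, $0$, $0$, or $-1$, while the right-hand side is $1$, $0$, $0$, or $0$ respectively, so the inequality is immediate in each case.

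There is no real obstacle here; the only points requiring care are not to over-claim — item~\ref{item:weak-morgan} is deliberately just the ``weak De Morgan'' direction, the form actually used later in \cref{thm:distributive-laws}, and not the (also valid but irrelevant) bound $\embed{\phi_1\land\phi_2}\le\min(\embed{\phi_1},\embed{\phi_2})$ — and to keep the statements strictly pointwise (``everywhere''), matching how they are invoked downstream.
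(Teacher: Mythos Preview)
Your proposal is correct and essentially matches the paper's proof: both arguments work pointwise at an arbitrary state, the paper handling item~\ref{item:weak-morgan} by the same four-case check you give as your alternative, and item~\ref{item:negation} by the same observation that negation swaps $0$ and $1$. Your algebraic reformulation $0\le(1-a)(1-b)$ is a mild cosmetic improvement over the paper's ``decreases by $1$ or $2$'' phrasing, but the content is identical.
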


\begin{proof}[\Cref{lemma:embeddings}]

  \begin{enumerate}

    \item Consider an element of \State which is both in \(\phi_1\) and in \(\phi_2\). Then the left hand side is 1, and so is the right hand side. The left hand side decreases by 1 or 2 for all other classes of elements, while the right hand side decreases by 1, so the inequality holds.
    \looseness = -1

    \medskip

    \item  \embed{\phi} is a zero-one function, and the construction just inverts the assignment of zeroes and ones, which is exactly what a negation does.

  \end{enumerate}
  \qed
\end{proof}

\subsection{Proofs for \cref{sec:distribution}}

\begin{proof}[\Cref{prop:weakening}]

  \begin{enumerate}

    \item[\ref{prop:weakening}.\ref{item:universal-lb}.] By \cref{lemma:expectation-properties}.\ref{item:expectation-nonnegative} and the last case of \cref{def:satisfaction}.

    \medskip

  \item[\ref{prop:weakening}.\ref{item:quant-weakening}.] By assumption \( \pfun_2 (\epsilon) \leq \pfun_1 (\epsilon) \leq \Expectation[\langle \epsilon, s\rangle]{\embed\phi} \) for any state (the latter by the last case in \cref{def:satisfaction}). Thus, by \cref{def:satisfaction}, the last case again, \( \epsilon \models \pbox{\pfun_2} \, \phi \).

    \medskip

    \item[\ref{prop:weakening}.\ref{item:conj-weakening}.] Note that \( \models \phi_1 \land \phi_2 \implies \phi_i \) and see the next point.

    \medskip

  \item[\ref{prop:weakening}.\ref{item:qual-weakening}.] Note that \( \embed{ \phi_1 } \leq \embed{ \phi_2 } \) everywhere. By the last case in \cref{def:satisfaction} and \cref{lemma:expectation-properties}.\ref{item:expectation-domination} we have \(\pfun(\epsilon) \leq \Expectation[\langle \epsilon, s \rangle]{\embed{\phi_1}} \leq \Expectation[\langle \epsilon, s \rangle]{\embed{\phi_2}} \), hence \( \epsilon \models \pbox{\pfun}\, \phi_2 \).
    \looseness = -1

  \end{enumerate}
  \qed

\end{proof}

\begin{proof}[\Cref{thm:distributive-laws}]

  \begin{enumerate}

    \item[\ref{thm:distributive-laws}.\ref{item:p-box-conjunction}.] We show that satisfaction can be concluded with the last case of \cref{def:satisfaction}:
    \begin{alignat*}{3}
      \pfun(\epsilon)
      & = \max ( \pfun_1 (\epsilon) + \pfun_2  (\epsilon) - 1, 0 ) \\
      & \leq \max ( \Expectation[\langle \epsilon, s \rangle]{\embed{\phi_1}} + \Expectation[\langle \epsilon, s \rangle]{\embed{\phi_2}} - 1, 0 )
      & \Since{\(\pfun_i\) lowerbounds}
      \\
      & \leq \max ( \Expectation[\langle \epsilon, s \rangle](){\embed{\phi_1} + \embed{\phi_2}} - 1, 0 )
      & \Since{\cref{lemma:expectation-properties}.\ref{item:expectation-monotonic-plus}}
      \\
      & \leq \Expectation[\langle \epsilon, s \rangle](){\max (\embed{\phi_1} + \embed{\phi_2} - 1, 0)}
      & \Since{\cref{lemma:expectation-properties}.\ref{item:expectation-add-const}}
      \\
      & \leq \Expectation[\langle \epsilon, s \rangle](){\max (\embed{\phi_1 \land \phi_2}, 0)}
      & \Since{\cref{lemma:embeddings}.\ref{item:weak-morgan} and \cref{lemma:expectation-properties}.\ref{item:expectation-domination}}
      \\
      & = \Expectation[\langle \epsilon, s \rangle]{\embed{\phi_1 \land \phi_2}}
      & \Since{\(\max(\embed{\phi_1 \land \phi 2}, 0) = \embed{\phi_1 \land \phi_2}\)}
    \end{alignat*}
    The \( \max \) operator is used to ensure that the obtained reward function is non-negative, which we wanted because we only work with rewards created by Boolean embeddings. This way the resulting expectation always has values in \( [0,1] \).

    \medskip

  \item[\ref{thm:distributive-laws}.\ref{item:p-box-disjunction-1}.] Assume, without loss of generality, that \(\epsilon \models \pbox{\pfun_1}\,\phi_1\) holds.  We show that satisfaction can be concluded with the last case of \cref{def:satisfaction}.
  \begin{alignat*}{3}
    \pfun(\epsilon) = \min (\pfun_1(\epsilon), \pfun_2(\epsilon))
    & \leq \pfun_1 (\epsilon)
    & \Since{minimum}
    \\
    & \leq \Expectation[\langle \epsilon, s \rangle]{\embed{\phi_1}}
    & \Since{\cref{def:satisfaction}, \(\textstyle \epsilon \models \pbox{\pfun_1}\phi_1\)}
    \\
    & \leq \Expectation[\langle \epsilon, s \rangle]{\embed{\phi_1 \lor \phi_2}}
    & \Since{\cref{prop:weakening}.\ref{item:qual-weakening}}
  \end{alignat*}

  \end{enumerate}
  \qed

\end{proof}

\begin{proof}[\Cref{thm:quantifiers-commute}]

  \begin{enumerate}

    \item[\ref{thm:quantifiers-commute}.\ref{item:universal-commutes}]  First observe that for any value \(v \in \dom l\) of a logical variable \(l \in L\) we have the following \emph{syntactic equality}:
  \[
    \left ( \pbox\pfun \, \phi \right ) \subst{l}{v} \quad \equiv \quad
    \pbox\pfun \,\left (  \phi \subst{l}{v}  \right )
  \]
  The two formulae are identical because the program \(s\) cannot refer to logical variables (\(l\)). Now we build the argument using this fact:
  \begin{alignat*}{3}
    & \epsilon \models \pbox\pfun \, \left ( \forall l \cdot  \phi \right )
    & \Since[5mm]{assumption}
    \\
    & \text{implies } \epsilon \models \pbox\pfun \left( \phi \subst{l}{v} \right)
    & \Since[2mm]{pick \(v \! \in \! \dom l\), \cref{lemma:expectation-properties}.\ref{item:expectation-domination} as \(\embed{\forall l \cdot \phi} \leq \embed{\phi[l:=v]}\)}
    \\
    & \text{iff } \epsilon \models \left( \pbox\pfun \, \phi \right) \subst{l}{v}
    & \Since{syntactic equality, above}
  \end{alignat*}
  Now observe that we have shown that \(\epsilon \models \pbox\pfun \, \left ( \forall l \cdot  \phi \right )\) implies \( \epsilon \models \left( \pbox\pfun \, \phi \right) \subst{l}{v} \) for arbitrary \(v \in \dom l\). By \cref{def:satisfaction} this means that it also implies \(\epsilon \models \forall l \cdot \pbox\pfun \, \phi\).

  \medskip

  Consider a counter example for the opposite direction. Let \scoinone be a program modeling a fair coin: \(\scoinone ::= \code{x := H} \,\pchoice{\func{\nicefrac12}}\, \code{x := T} \) and let \(\dom l = \{ \code{H}, \code{T}\}\), and the following two properties. The first property \(\phi_1\) holds in any environment, the second property \(\phi_2\) holds in no environment.
    \begin{equation}
      \phi_1 \equiv \forall l \cdot \pbox[\scoinone]{\func{\nicefrac12}} \, (x=l)
      \quad
      \quad
      \quad
      \phi_2 \equiv \pbox[\scoinone]{\func{\nicefrac12}} (\forall l \cdot x=l)
    \end{equation}

  \clearpage

  \item[\ref{thm:quantifiers-commute}.\ref{item:existential-commutes}]
    We first prove the positive case of the theorem:
    \begin{alignat*}{3}
      & \epsilon \models \exists l \cdot \pbox\pfun \, \phi
      & \Since{assumption}
      \\
      & \text{iff } \epsilon \models \neg \forall  l \cdot \neg\pbox\pfun \, \phi
      & \Since{syntactic sugar}
      \\
      & \text{iff } \text{not } \epsilon \models \forall l \cdot \neg\pbox\pfun \, \phi
      & \Since{\cref{def:satisfaction}}
      \\
      & \text{iff } \text{not for all \(v \!\in\! \dom l\):~ } \epsilon \models \neg\pbox\pfun \, (\phi \subst lv)
      & \Since[20mm]{\cref{def:satisfaction}, synt.\ equality above}
      \\
      & \text{iff } \text{not for all \(v \!\in\! \dom l\) not:~ } \epsilon \models \pbox\pfun \, (\phi \subst lv)
      & \Since{\cref{def:satisfaction}}
      \\
      & \text{iff } \text{exists \(v \!\in\! \dom l\):~ }
      \pfun(\epsilon) \leq \inf_\pi \ExpectedV[\langle \epsilon, s \rangle, \pi]{\embed{\phi \subst lv}}
      & \Since[30mm]{\cref{def:satisfaction}, meta-exists}
      \\
      & \text{iff } \text{exists \(v \!\in\! \dom l\):~ }
      \pfun(\epsilon) \leq 1 - (1 - \inf_\pi \ExpectedV[\langle \epsilon, s \rangle, \pi]{\embed{\phi \subst lv}})
      & \Since{algebra}
      \\
      & \text{iff } \text{exists \(v \!\in\! \dom l\):~ }
      \pfun(\epsilon) \leq 1 - \sup_\pi \left ( 1 - \ExpectedV[\langle \epsilon, s \rangle, \pi]{\embed{\phi \subst lv}} \right )
      & \Since{\cref{lemma:expectation-properties}.\ref{item:inf-sup}}
      \\
      & \text{iff } \text{exists \(v \!\in\! \dom l\):~ }
      \pfun(\epsilon) \leq 1 - \sup_\pi \ExpectedV[\langle \epsilon, s \rangle, \pi](){1 - \embed{\phi \subst lv}}
      & \Since[3mm]{sum expected vals}
      \\
      & \text{iff } \text{exists \(v \!\in\! \dom l\):~ }
      \pfun(\epsilon) \leq 1 - \sup_\pi \ExpectedV[\langle \epsilon, s \rangle, \pi]{\embed{\neg \phi \subst lv}}
      & \Since{\cref{lemma:embeddings}.\ref{item:negation}}
      \\
      & \text{implies } \text{exists \(v \!\in\! \dom l\):~ }
      \pfun(\epsilon) \leq 1 - \sup_\pi \ExpectedV[\langle \epsilon, s \rangle, \pi]{\embed{\forall l \cdot \neg \phi}}
      &
      \\
      &
      & \Since[30mm]{as \(\embed{\forall l \cdot \neg \phi} \leq \embed{\neg \phi\subst lv}\) for any \(v\)}
      \\
      & \text{iff } \pfun(\epsilon) \leq 1 - \sup_\pi \ExpectedV[\langle \epsilon, s \rangle, \pi]{\embed{\forall l \cdot \neg \phi}}
      & \Since[30mm]{drop free quantifier}
      \\
      & \text{iff } \pfun(\epsilon) \leq \inf_\pi \ExpectedV[\langle \epsilon, s \rangle, \pi](){1 - \embed{\forall l \cdot \neg \phi}}
      & \Since{\cref{lemma:expectation-properties}.\ref{item:inf-sup}}
      \\
      & \text{iff } \pfun(\epsilon) \leq \inf_\pi \ExpectedV[\langle \epsilon, s \rangle, \pi]{\embed{\neg \forall l \cdot \neg \phi}}
      & \Since{\cref{lemma:embeddings}.\ref{item:negation}}
      \\
      & \text{iff } \pfun(\epsilon) \leq \Expectation[\langle \epsilon, s \rangle]{\embed{\exists l \cdot \phi}}
      & \Since[30mm]{syntactic sugar, \cref{eq:expectation}}
      \\
      & \text{iff } \epsilon \models \pbox\pfun (\exists l \cdot \phi)
      & \Since{\cref{def:satisfaction}}
    \end{alignat*}

  Consider a counter example for the opposite direction. Recall the program \scoin\ modeling a ``non-deterministic coin:`` \(\scoin ::= \code{x: = H} \,\sqcup\, \code{x := T} \) and \(\dom l = \{ \code{H}, \code{T}\}\), and the following two properties. The first property \(\phi_1\) holds in any environment, the second property \(\phi_2\) holds in no environment.
    \begin{equation}
      \phi_1 \equiv \pbox[\scoin]{\func1} (\exists l \cdot x=l)
      \quad
      \quad
      \quad
      \phi_2 \equiv \exists l \cdot \pbox[\scoin]{\func1} (x=l)
    \end{equation}

  \end{enumerate}
  \qed

\end{proof}

\begin{proof}[\Cref{thm:implication-elimination}]
  We prove the co-occurrence of the two validities by splitting in cases, based on whether a particular valuation \(\epsilon\) satisfies the precondition or not.

  \medskip

  \noindent
  Case 1: \(\epsilon \models \phi_1 \) (works in both directions):
  \begin{alignat*}{3}
    & \epsilon \models \phi_1 \implies (\pbox\pfun \, \phi_2)
    & \Since{thm assumption}
    \\
    & \epsilon \models \pbox\pfun \, \phi_2
    & \Since{case}
    \\
    & \epsilon \models \pbox{\pfun\downarrow\phi_1} \, \phi_2
    & \Since{\((\pfun\downarrow\phi_1)(\epsilon) = \pfun(\epsilon)\)}
  \end{alignat*}

  \noindent
  Case 2: not \(\epsilon \models \phi_1 \) then the left-hand-side holds vacuously.   The right-hand-side also holds because the expectation is zero \((\pfun\downarrow\phi_1)(\epsilon) = \pfun(\epsilon)\cdot (\embed{\phi_1}(\epsilon)) =  \pfun(\epsilon) \cdot 0 = 0\)
  which means the formula holds by \cref{prop:weakening}.\ref{item:universal-lb}.
  \qed

\end{proof}

\subsection{Proofs for \cref{sec:program-props}}

\begin{proof}[\Cref{thm:choices}]

  \begin{enumerate}

    \item[\ref{thm:choices}.\ref{item:extension}] Let \(i\) denote the program \(s_1 \pchoice{e} s_2\). From left to right, we need to show that \(\epsilon(e)\func{p}_1(\epsilon) + \! (1 \! - \! \epsilon(e))\func{p}_2(\epsilon) \leq \Expectation[\langle \epsilon, i \rangle]{\embed\phi}\) where the expectation is taken in \(\MDP_{s_1 \pchoice{\!e\,} s_2}\).
    \begin{alignat*}{3}
      & \epsilon(e)\func{p}_1(\epsilon) + (1-\epsilon(e))\func{p}_2(\epsilon)
      &
      \\
      & \leq \epsilon(e)\Expectation[\langle \epsilon, s_1 \rangle]{\embed\phi} + (1-\epsilon(e))\Expectation[\langle \epsilon, s_2 \rangle]{\embed\phi}
      & \Since[10mm]{left in \(\MDP_{s_1}\), right in \(\MDP_{s_2}\) by assumption}
      \\
      & = \Expectation[\langle \epsilon, s_1 \rangle]{\left[ \epsilon(e)\embed\phi \right]} + \Expectation[\langle \epsilon, s_2 \rangle]{ [ (1-\epsilon(e))\embed\phi ] }
      & \Since[10mm]{\cref{lemma:expectation-properties}.\ref{item:expectation-mult-const} twice}
      \\
      & \leq \Expectation[\langle \epsilon, i \rangle]{ \left [ \epsilon(e){\embed\phi^{(1)}} \! + (1\!-\!\epsilon(e)){\embed\phi^{(2)}} \right ] }
      & \Since[10mm]{\cref{lemma:expectation-properties}.\ref{item:expectation-monotonic-plus}, \(\embed\phi^{(k)}\!\!\) positive only for \(s_k\) states}
      \\
      & = \Expectation[\langle \epsilon, i \rangle]{\embed\phi}
      & \Since[44mm]{now in \(\MDP_{s_1\pchoice{e}s_2}\) extending expected value as per \textsc{ProbChoice1/2}}
    \end{alignat*}

    \medskip

    \item[\ref{thm:choices}.\ref{item:ndchoice}] Let \(i\) denote the program \(s_1 \ndchoice s_2\). For the proof from left to right, we need to show that \(\pfun(\epsilon) \leq \Expectation{\embed{\phi}}\) where the expectation is taken in the \( \MDP_{s_1 \ndchoice s_2} \):
  \begin{alignat*}{2}
    & \pfun(\epsilon)
    \\
    & \leq \min \{ \Expectation[\langle \epsilon, s_1 \rangle]{\embed\phi}, \Expectation[\langle \epsilon, s_2 \rangle]{\embed\phi} \}
    & \Since[10mm]{left in \(\MDP_{s_1}\), right in \(\MDP_{s_2}\) by assumption}
    \\
    & = \min \{ \inf_{\pi_1}\ExpectedV[\langle \epsilon, s_1 \rangle, \pi_1]{\embed\phi}, \inf_{\pi_2}\ExpectedV[\langle \epsilon, s_2 \rangle, \pi_2]{\embed\phi} \}
    & \Since{left in \(\MDP_{s_1}\), right in \(\MDP_{s_2}\) by \cref{eq:expectation}}
    \\
    & = \inf \{ \inf_{\pi_1}\ExpectedV[\langle \epsilon, s_1 \rangle, \pi_1]{\embed\phi}, \inf_{\pi_2}\ExpectedV[\langle \epsilon, s_2 \rangle, \pi_2]{\embed\phi} \}
    & \Since{inf is minimum in a finite set}
    \\
    & = \inf_\pi \ExpectedV[\langle \epsilon, i \rangle, \pi]{\embed\phi}
    & \Since[50mm]{inf of infima, \textsc{DemChoice} has two policies, no \(\epsilon\) change}
    \\
    & = \Expectation[\langle \epsilon, i \rangle]{\embed\phi}
    & \Since{in \(\MDP_{s_1 \ndchoice s_2}\), \cref{eq:expectation}}
  \end{alignat*}
  The argument in the opposite direction follows by the same equalities backwards, and then the fact that a lower bound of a minimum is small than each element in a set.
  \looseness = -1
\end{enumerate}
  \qed
\end{proof}

\begin{proof}[\Cref{thm:sequencing}]
  Let \(i\) denote the program \(s_1 ; s_2\). To show that
  \[
    \epsilon \models \pbox[s_1; s_2]{\Expectation[\langle \epsilon, s_1 \rangle](){\pfun\downarrow \phi_1}} \, \phi_2
  \]
  we need to demonstrate that
  \[
    \Expectation[\langle \epsilon, s_1 \rangle](){\pfun\downarrow\phi_1} \leq \Expectation[\langle \epsilon, i \rangle]{\embed{\phi_2}} \enspace ,
  \]
  where the left expectation is taken in \(\MDP_{s_1}\) and the right expectation is taken in \(\MDP_{s_1; s_2}\). We use the theorem's assumption \(\models \phi_1 \implies (\pbox[s_2]\pfun\,\phi_2)\) to get \(\models  \pbox[s_2]{\pfun\downarrow\phi_1}\,\phi_2\) by \cref{thm:implication-elimination}. This in turn means that \((\pfun\downarrow\phi_1)(\epsilon')\leq \Expectation[\langle \epsilon', s_2 \rangle]{\embed{\phi_2}}\) for all \(\epsilon'\) (with the latter expectation taken in \(\MDP_{s_2}\)).
    \begin{alignat*}{3}
      & \Expectation[\langle \epsilon, s_1 \rangle](){\pfun\downarrow\phi_1}
      &
      \\
      & \leq \Expectation[\langle \epsilon, s_1 \rangle](){\lambda \epsilon' \cdot \Expectation[\langle \epsilon', s_2 \rangle]{\embed{\phi_2}}}
      & \Since{because \(\models \pbox[s_2]{\pfun\downarrow\phi_1} \, \phi_2\) and \cref{def:satisfaction}}
      \\
      & \leq \Expectation[\langle \epsilon, i \rangle]{\embed{\phi_2}}
      & \Since{\cref{lemma:expectation-properties}.\ref{item:nested-expectations}}
    \end{alignat*}
  \qed
\end{proof}

\begin{proof}[\Cref{thm:simple-statements}]
  \begin{enumerate}

    \item[\ref{thm:simple-statements}.\ref{item:skip}] Let us start from the left hand side, so assume that \( \epsilon \models \pbox[\code{skip}]{\func 1} \, \phi \). Then
      \begin{alignat*}{3}
        1
        & = \func{1}(\epsilon)
        &
        \\
        & \leq \Expectation[\langle \epsilon, \code{skip} \rangle]{\embed\phi}
        & \Since{\cref{def:satisfaction}}
        \\
        & = \inf_\pi \ExpectedV[\langle \epsilon, \code{skip} \rangle, \pi]{\embed\phi}
        & \Since{\cref{eq:expectation}}
        \\
        & = \ExpectedV[\langle \epsilon, \code{skip} \rangle]{\embed\phi}
        & \Since{\code{skip} deterministic, single policy}
        \\
        & = \sum_{\overline\sigma \in \textrm{paths}(\langle \epsilon, \code{skip}\rangle)} \hspace{-5mm} \Pr(\overline\sigma) \cdot (\embed\phi (\final{\overline\sigma}))
        & \Since{\cref{eq:expectation}}
        \\
        & = \sum_{\overline\sigma \in \textrm{paths}(\langle \epsilon, \code{skip}\rangle)} \hspace{-5mm} \Pr(\overline\sigma) \cdot (\embed\phi (\epsilon))
        & \Since[12mm]{definition of final, for \PDL \(\embed{\phi}(\langle\epsilon, \code{skip}\rangle) = \embed\phi(\epsilon)\)}
        \\
        & = 1 \cdot \embed\phi (\epsilon)
        & \Since{only one final path, singleton size}
      \end{alignat*}

      We have shown that \(1 \leq \embed{\phi}(\epsilon)\), which by definition of characteristic functions means that \(\epsilon \models \phi\). For the opposite direction, take \(\epsilon\models\phi\) and use the above six equalities from the bottom to show that \(\func{1}(\epsilon) = \Expectation[\langle \epsilon, \code{skip} \rangle]{\embed\phi}\), which means \(\epsilon \models \pbox[\code{skip}]{\func{1}} \,\phi \).

      \medskip

    \item[\ref{thm:simple-statements}.\ref{item:skip-composed}] One could argue from \cref{thm:sequencing}, but this is cumbersome, as it requires weakening the left-hand side to validity, which is not needed for \code{skip}, a special deterministic case. Thus it is better to prove directly from definition. The key step is that the infimum over all policies \(\pi\) for \(s\) is the same for \(\code{skip}; s\), because skip does not change valuation and the set of policies, see rule \textsc{Composition1}; so in this case neither the reward valuation or the policy can be chosen differently. The argument works in both directions.

    \medskip

    \item[\ref{thm:simple-statements}.\ref{item:assignment}] To prove this case we observe that executing a single assignment statement (see \textsc{Assign}) does not change the probability of final paths in the MDP associated with the program, so it does not change the expectation of the formula as long as the program is executed in the same valuation that the assignment creates.  This argument works in both directions:
      \begin{alignat*}{3}
          & \epsilon \models \pbox[ \code{x:=} e; s ]\pfun \, \phi
          &
          \\
          \text{iff }
          & \epsilon [x \mapsto \epsilon(e)] \models \pbox[\code{skip}; s]\pfun \, \phi
          & \Since{\textsc{Assign} does not change probability of paths}
          \\
          \text{iff }
          & \epsilon [x \mapsto \epsilon(e)] \models \pbox\pfun \, \phi
          & \Since{\cref{thm:simple-statements}.\ref{item:skip-composed}}
      \end{alignat*}

      \medskip

  \end{enumerate}
  \qed

  \end{proof}

\begin{proof}[\Cref{thm:loops-conditionals}]

  \begin{enumerate}

    \item[\ref{thm:loops-conditionals}.\ref{item:if-true}]  Let's write \(i\) for the program \(i=\IfThenElse\). We want to show that \(\pfun(\epsilon) \leq \Expectation[\langle \epsilon, i \rangle]{\embed\phi}\) where the expectation is taken in the MDP \(\MDP_i\). To do this we will reduce the calculation of the expectation to the MDPs \(\MDP_{s_1}\) and \(\MDP_{s_2}\) with the prefix for resolving the condition:
      \begin{alignat*}{3}
        & \pfun(\epsilon)
        &
        \\
        & \leq \Expectation[\langle \epsilon, s_1 \rangle]{\embed\phi}
        & \Since[180mm]{in \(\MDP_{s_1}\), assumption and \cref{def:satisfaction} twice}
        \\
        & = \inf_\pi \ExpectedV[\langle \epsilon, s_1 \rangle]{\embed\phi}
        & \Since[150mm]{in \(\MDP_{s_1}\), \cref{eq:expectation}}
        \\
        & = \inf_\pi \sum_{\sigma\in\paths{\langle\epsilon,s_1\rangle}} \Pr(\sigma) \cdot {\embed\phi}(\final{\sigma})
        & \Since[150mm]{\cref{eq:expectation}, path prob.\ taken under \(\pi\)}
        \\
        & = \inf_\pi \left [ 1 \cdot \hspace{-3mm} \sum_{\sigma\in\paths{\langle\epsilon,s_1\rangle}} \hspace{-3mm} \Pr(\sigma) \cdot {\embed\phi}(\final{\sigma}) \right ]
        & \Since[50mm]{trivial, also below}
        \\
        & = \inf_\pi \left [  1 \cdot \hspace{-4mm} \sum_{\sigma\in\paths{\langle\epsilon, s_1 \rangle}} \Pr(\sigma) \cdot {\embed\phi}(\final{\sigma}) +
        0 \cdot \hspace{-4mm} \sum_{\sigma\in\paths{\langle\epsilon, s_2 \rangle}} \Pr(\sigma) \cdot {\embed\phi}(\final{\sigma})  \right ]
        \\
        & = \inf_\pi  \sum_{\sigma\in\paths{\langle\epsilon, i \rangle}} \Pr(\sigma) \cdot {\embed\phi}(\final{\sigma})
        & \Since[150mm]{\textsc{If1}, assumption that \(\epsilon\models e\), \cref{def:satisfaction}}
        \\
        & = \Expectation[\langle \epsilon, i \rangle]{\embed\phi}
        & \Since[32mm]{in \(\MDP_i\)}
      \end{alignat*}
      Notice the shift of the MDP in the penultimate line to \(\MDP_i\) in the sum index. A line above, the assumption that the atomic formula \(e\) holds allows us to extend  the expectation by the suitable choice of the if-branch.  Since the if condition evaluates to true, the rule \textsc{if1} creates an MDP prefix advancing with probability one to state \(\langle \epsilon, s_1\rangle\) and with probability zero to the other state (we assume the same evaluation semantics for \LIT in the logic and for expressions in \pGCL). Afterwards we just observe that this is the same as calculating expected values directly from the if head.
      \looseness = -1

      \medskip

    \item[\ref{thm:loops-conditionals}.\ref{item:if-false}] The proof is symmetric to the previous case, just with the other \textsc{If} rule.

      \medskip

    \item[\ref{thm:loops-conditionals}.\ref{item:while-unfold}] This simple rule follows directly from operational semantics rules \textsc{While-1} and \textsc{While-2}:
      \begin{equation}
        \pfun \leq \Expectation[\langle \epsilon, i_1 \rangle]{\embed{\phi}} = \Expectation[\langle \epsilon, i_2 \rangle]{\embed{\phi}}
      \end{equation}
    where  \(i_1 = \whileunfolded\) and  \(i_2 = \whilefolded\).
    The equality of expectations holds because the two MDPs are identical---represent the same tree of terminating expectations over the same states, with the same branching structure, and final states. (Recall we only reason about almost surely terminating programs.) The \textsc{If-1} rule on the first program has exactly the same effect (and successors) as \textsc{While-1} rule on the second program.  Similarly both \textsc{If-2} on the first and \textsc{While-2} on the second program reduce to \code{skip} with the same valuation.

  \end{enumerate}
  \qed
\end{proof}

\subsection{Proofs for \cref{sec:purely-probabilistic}}

\begin{proof}[\Cref{thm:disjunction-to-conjunction}]
  Let \(s\) be a pGCL program,  \(\epsilon\) stand for a valuation, \( \func p \in \State \rightarrow [0, 1] \) be an expectation function, \(\pi\) a policy resolving non-determinism in \(s\), and \( \phi_i \in \PDL \) properties.  The expected reward \(\pfun\) after \(s\) for \(\embed{\phi_1\lor\phi_2}\)  under the scheduler \(\pi\) is a lower-bound for the sum of the separate expectations of \(\phi_1\) and \(\phi_2\), formally:
  \begin{equation}
    \pfun = \ExpectedV[\langle \epsilon, s \rangle, \pi]{\embed{\phi_1 \lor \phi_2}} \leq
    \ExpectedV[\langle \epsilon, s \rangle, \pi]{\embed{\phi_1}} + \ExpectedV[\langle \epsilon, s \rangle, \pi]{\embed{\phi_2}}\enspace.
  \end{equation}
  This is because in the rightmost sum, some of the states can be counted twice: once for \(\phi_1\) and once for \(\phi_2\).

  Note that, if $s$ is purely probabilistic (does not use the demonic choice operator) then the above can be written equivalently as follows: If \(\epsilon \models \pbox{\pfun} \, ( \phi_1 \lor \phi_2 ) \) then there exist \(\pfun_1\), \(\pfun_2\) such that \( \epsilon \models \pbox{\pfun_1}\, \phi_1 \) \textbf{and} \( \epsilon \models \pbox{\pfun_2}\, \phi_2 \) and \( \pfun_1 + \pfun_2 \geq \pfun \) everywhere.
  \qed

\end{proof}

\end{document}